\newdefinition{example}{Example}
\newdefinition{remark}{Remark}
\newtheorem{lemma}{Lemma}
\newtheorem{definition}{Definition}
\newtheorem{proposition}{Proposition}
\begin{document}
\begin{frontmatter}
\title{Harnessing Membership Function Dynamics for Stability Analysis of T-S Fuzzy Systems}

\author[KAIST]{Donghwan Lee}\ead{donghwan@kaist.ac.kr}
\author[Hanbat]{Do Wan Kim}\ead{dowankim@hanbat.ac.kr}

\address[KAIST]{Department of Electrical Engineering, KAIST, Deajeon, South Korea}
\address[Hanbat]{Department of Electrical Engineering, Hanbat National University, Daejeon, South Korea}

\begin{abstract}
The main goal of this paper is to develop a new linear matrix inequality (LMI) condition for the asymptotic stability of continuous-time Takagi-Sugeno (T-S) fuzzy systems. A key advantage of this new condition is its independence from the bounds on the time-derivatives of the membership functions, a requirement present in the existing approaches. This is achieved by introducing a novel fuzzy Lyapunov function that incorporates an augmented state vector. Notably, this augmented state vector encompasses the membership functions, allowing the dynamics of these functions to be integrated into the proposed condition. This inclusion of additional information about the membership function serves to reduce the conservativeness of the suggested stability condition. To demonstrate the effectiveness of the proposed method, examples are provided.
\end{abstract}

\begin{keyword}
Nonlinear systems, Takagi--Sugeno (T--S) fuzzy systems, local stability, linear matrix inequality (LMI), relaxed condition
\end{keyword}

\end{frontmatter}

\section{Introduction}

Over the last few decades, Takagi-Sugeno (T-S) fuzzy systems~\cite{song2023improved,sun2023composite,du2024dynamic,nguyen2019fuzzy,xie2022relaxed,xie2022enhanced,tian2021switched,zheng2021membership,pan2021improved,ku2021observer,zhu2020adaptive} have emerged as popular tools for modeling nonlinear systems~\cite{khalil2002nonlinear}. These systems are characterized by a convex combination of linear subsystems, each modulated by nonlinear membership functions (MFs)~\cite{tanaka2004fuzzy}. Notably, within the context of Lyapunov theory, T-S fuzzy systems provide a systematic method to apply linear system theories to the stability analysis and control design of nonlinear systems. Additionally, they allow for the development of problem formulations based on convex linear matrix inequality (LMI) optimization methods, which can be efficiently solved using standard convex optimization techniques~\cite{boyd1994linear}. Due to these benefits, there has been significant interests in the stability analysis and control design of T-S fuzzy systems during the last decades~\cite{nguyen2019fuzzy}.

While T-S fuzzy model-based methods offer advantages, they are often hampered by a significant level of conservatism arising from various sources: 1) The transformation of infinite-dimensional parameterized linear matrix inequalities (PLMIs), inherent in Lyapunov inequalities, into finite-dimensional LMI problems introduces notable conservatism. 2) The conservative nature also stems from the limited architectures of Lyapunov functions and control laws. 3) The effort to ensure stability across a broad region of the state-space can lead to excessive conservatism, especially when the actual nonlinear system is stable only in a limited region around the origin. To mitigate this conservatism, extensive research has been dedicated, which generally falls into three main categories:
\begin{figure}
\centering
\includegraphics[width=6cm,height=5.5cm]{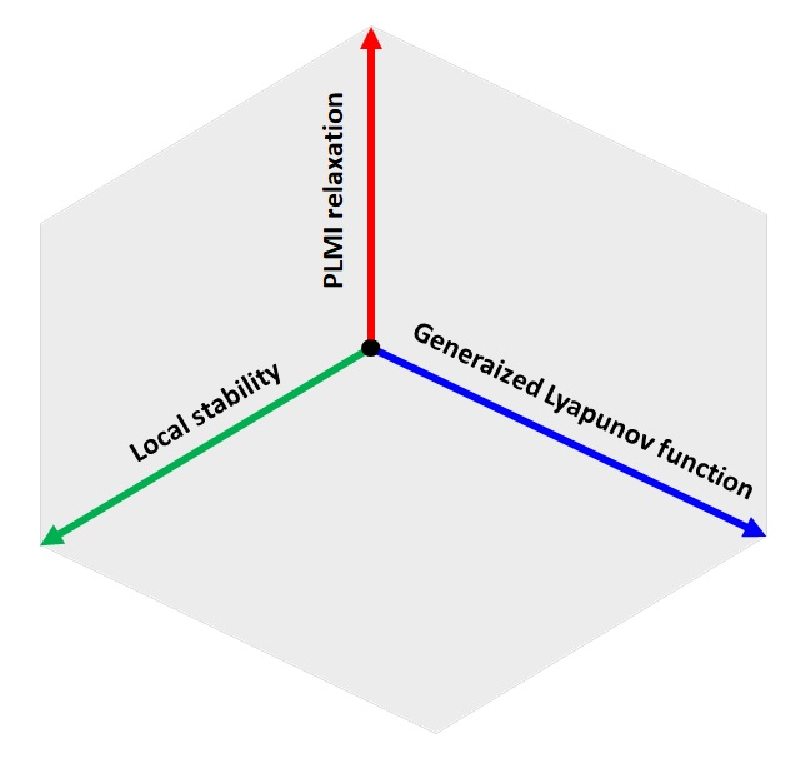}
\caption{Three main directions to reduce the conservatism}\label{fig:1}
\end{figure}
\begin{enumerate}
\item {\bf Relaxation of PLMIs}: Techniques for relaxing PLMIs have been developed, which include various methods such as: 1) slack variable methods, as explored in several studies~\cite{fang2006new,kim2000new,kim2022improvement}, which introduce additional variables to reduce conservatism in the LMIs, 2) convex summation relaxation techniques~\cite{tuan2001parameterized,kim2023relaxed}, which offer a way to approximate the solution of PLMIs more accurately, 3) multidimensional convex summation techniques~\cite{sala2007asymptotically,oliveira2007parameter} based on Polya’s theorem, which extend the convex summation approach into higher dimensions for improved results, 4) methods~\cite{sala2007relaxed,narimani2009relaxed,kruszewski2009triangulation,lam2009quadratic} that utilize the structural information of MFs, providing a more tailored approach to relaxing PLMIs by considering the specific properties of the MFs.

\item {\bf Generalization of Lyapunov Functions and Control Laws}: Numerous advancements have been made in the realm of traditional Lyapunov functions, leading to the proposal of various extensions. These include piecewise Lyapunov functions, as discussed in several studies~\cite{johansson1999piecewise,feng2003controller,campos2012new}. Additionally, the concept of fuzzy Lyapunov functions (FLFs) has been extensively explored~\cite{tanaka2003multiple,mozelli2009reducing,guerra2004lmi,ding2006further,lee2010improvement,lee2011new,lee2014generalized,lee2013relaxed,ding2010homogeneous,lazarini2021relaxed}. Other notable extensions are line integral Lyapunov functions~\cite{rhee2006new,gonzalez2019generalised}, polynomial Lyapunov functions~\cite{Tanaka2009,sala2009polynomial,lam2011polynomial,bernal2011stability}, and augmented FLFs~\cite{kruszewski2008nonquadratic,lee2011approaches,lee2010further,lee2011further}, among others.

\item {\bf Local Stability Approaches}: Several studies have concentrated on determining the domain of attraction (DA) via local stability analysis~\cite{lee2013relaxed,bernal2011stability,bernal2010generalized,pan2011nonquadratic,lee2012fuzzy,lee2013linear,lee2016local,lee2014local}. These methods assess stability within a confined region, aiming to strike a balance between the size of the DA and the level of conservatism. This approach seeks to minimize conservatism while maintaining a reasonable estimation of the DA.

\end{enumerate}

The main goal of this paper is to introduce a new, derivative-free, and less conservative LMI criterion for the asymptotic stability of continuous-time T-S fuzzy systems based on an augmented Lyapuov function. While the FLF method is widely recognized for its effectiveness in reducing conservativeness, it generally requires bounding the time-derivatives of the MFs. However, it introduces a tuning issue for these bounds. Moreover, this method necessitates considering local stability, as only smaller state-space subsets meet the time-derivative constraints, which requires additional efforts for complex analysis.

Motivated by the above discussions, we propose a new LMI-based condition, which eliminates the need for such bounds on the derivatives. We expect that this new approach makes a meaningful departure from the existing approaches and provides additional insights into stability of T-S fuzzy systems. This advancement is achieved by introducing a new Lyapunov function, which includes an augmented state vector. This augmented state vector encompasses the MFs, which enables the integration of the MF dynamics into our proposed criterion, and this additional information of the system dynamics potentially helps reduce conservativeness of the condition.

In other words, we treat the T-S fuzzy system as a nonlinear system which is an interconnection of two nonlinear subsystems, one captures the dynamics of the membership functions (MFs) and the other represents the original nonlinear system.
This approach not only offers new insights into stability analysis but also enhances the resulting stability condition in terms of reduced conservativeness and increased practicality. To validate the effectiveness of our method, some illustrative examples are given.

In summary, the main contribution of this paper is the introduction of a new LMI criterion for the asymptotic stability of continuous-time T-S fuzzy systems, featuring a newly proposed augmented Lyapunov function that eliminates the need for bounding the derivatives of MFs. In particular, the approach to reducing conservativeness in this study is summarized as follows:
\begin{enumerate}
\item The proposed method integrates MFs as new state variables, incorporating their dynamics into the nonlinear system under consideration. By treating the T-S fuzzy system as an interconnected nonlinear system composed of two subsystems, one representing MF dynamics and the other the original nonlinear system, we obtain additional insights into the system dynamics that reduce the conservativeness of the stability condition.

\item From this analysis, the proposed condition eliminates the need for bounding the derivatives of MFs by integrating them into the nonlinear system dynamics, and hence, it reduces the conservatism induced by such bounds.
\end{enumerate}

\section{Preliminaries}\label{sec:preliminaries}

\subsection{Notation}

The adopted notation is as follows. ${\mathbb R}$: sets of real
numbers; $A^T$: transpose of matrix $A$; $A \succ 0$ ($A \prec 0$,
$A \succeq 0$, and $A \preceq 0$, respectively): symmetric
positive definite (negative definite, positive semi-definite, and
negative semi-definite, respectively) matrix $A$; $I_n$ and $I$: $n\times n$ identity matrix and identity matrix of appropriate dimensions, respectively; $0$: zero vector of appropriate dimensions; ${\rm{co}}\{  \cdot \}$: convex hull \cite{Boyd2004}; $*$ inside a matrix: transpose of its symmetric term; ${\bf{1}}_r : = [\begin{array}{*{20}c}
   1 & 1 &  \cdots  & 1  \\
\end{array}]^T \in {\mathbb R}^r $; $A \otimes B$: Kronecker product of matrices $A$ and $B$.

\subsection{Takagi--Sugeno (T--S) fuzzy system}

In this paper, we consider the continuous-time nonlinear system
\begin{align}
&\dot x_t = f(x_t),\label{eq:nonlinear-system}
\end{align}
where $t \geq 0$ is the time, $x_t: = \begin{bmatrix} x_{t,1} & \cdots  & x_{t,n}\\
\end{bmatrix}^T \in {\mathbb R}^n$ is the state, $f:{\mathbb R}^n \to {\mathbb R}^n$ is a nonlinear function
such that $f(0) = 0$, i.e., the origin is an equilibrium
point of~\eqref{eq:nonlinear-system}. By the sector nonlinearity
approach~\cite{tanaka2004fuzzy}, one can obtain the following T--S
fuzzy system representation of the nonlinear
system~\eqref{eq:nonlinear-system} in the subset ${\cal X}$ of the
state-space ${\mathbb R}^n$:
\begin{align}
\dot x_t = \sum_{i = 1}^r {{\alpha _i}(x_t){A_i}x_t} =: A(\alpha(x_t)) x_t ,\quad \forall x_t \in {\cal X}\label{eq:fuzzy-system}
\end{align}
where $A_i \in {\mathbb R}^{n \times
n}, i \in {\cal I}_r :=
\{ 1,\,2, \ldots ,\,r\}$ are subsystem matrices, and $\alpha_i: {\cal X} \to {\mathbb R}, i \in {\cal I}_r$ is called a membership function (MF).
The following vector enumerates the MFs:
\begin{align}
\alpha (x): = \left[ {\begin{array}{*{20}{c}}
{{\alpha _1}(x)}\\
{{\alpha _2}(x)}\\
 \vdots \\
{{\alpha _r}(x)}
\end{array}} \right] \in {\mathbb R}^r\label{eq:alpha-vec}
\end{align}
that lies in the unit simplex
\[
{\Lambda _r}: = \left\{ {\begin{array}{*{20}{c}}
{\alpha  \in {\mathbb R}^r: \sum\limits_{i = 1}^r {{\alpha _i}}  = 1,0 \le {\alpha _i} \le 1,i \in {\cal I}_r}
\end{array}} \right\}.
\]
In addition, ${\cal X} \in {\mathbb R}^n $ is a subset of the state-space ${\mathbb R}^n$ including the origin $0 \in {\cal X}$ and satisfies
\begin{align}
{\cal X} \subseteq \left\{ {x \in {\mathbb R}^n:f(x) = \sum_{i = 1}^r {{\alpha _i}(x)A_ix} } \right\}\label{region-X}
\end{align}
In other words, ${\cal X}$ is a modelling region where the T--S fuzzy model is defined.

As a running example, we consider the following system.
\begin{example}[{\cite{mozelli2009reducing}}]\label{ex:1}
Let us consider the nonlinear system~\eqref{eq:nonlinear-system} with $f(x) = \left[ {\begin{array}{*{20}{c}}
{{x_2}}\\
{\frac{{ - 4 - \lambda  + \lambda \sin {x_1}}}{2}{x_1} - {x_2}}
\end{array}} \right]$, which can be rewritten by the fuzzy system~\eqref{eq:fuzzy-system} with
\begin{align}
&A_1  = \left[ {\begin{array}{*{20}c}
   { 0} & 1  \\
   { - 2} & { - 1}  \\
\end{array}} \right],\quad A_2  = \left[ {\begin{array}{*{20}c}
   { 0} & 1  \\
   { - \left( {2 + \lambda } \right)} & { - 1}  \\
\end{array}} \right],\label{system-matrix}\\
&\alpha_1(x)  = \frac{{1 + \sin x_1 }}{2},\quad \alpha_2(x)  = 1 -
\alpha_1(x),\nonumber\\
&{\mathcal X} = \left\{ {x \in {\mathbb R}^n :\,x_1  \in
[ - \pi /2,\,\pi /2]} \right\}\nonumber
\end{align}
\end{example}

In what follows, some important notions and lemmas are briefly introduced.
\begin{definition}
The system~\eqref{eq:nonlinear-system} or~\eqref{eq:fuzzy-system} is called locally asymptotically stable at the origin if there exists a neighborhood of $0 \in {\mathbb R}^n$ such that any $x_t$ with $x_0$ inside the neighborhood, $x_t \to 0$ as $t\to \infty$.  A region ${\cal D} \subseteq {\cal X}$ is called a domain of attraction (DA)~\cite{khalil2002nonlinear} if any $x_t$ with $x_0 \in {\cal D}$, $x_t \to 0$ as $t\to \infty$.
\end{definition}

\begin{definition}
For any symmetric matrix $\Upsilon(\alpha) \in {\mathbb R}^{n\times n}$ depending on the parameter $\alpha \in \Lambda_r$, the following matrix inequality is called a parameterized matrix inequality (PMI):
\begin{align*}
\Upsilon (\alpha ) \prec 0,\quad \alpha  \in {\Lambda _r}.
\end{align*}
When the PMI depends linearly on the decision variables, then it is called the parameterized linear matrix inequality (PLMI).
Let us consider any symmetric matrices ${\Upsilon _{{i_1}{i_2} \cdots {i_N}}},({i_1},{i_2}, \ldots ,{i_N}) \in {\cal I}_r \times {\cal I}_r \times  \cdots  \times {\cal I}_r$.
A PMI represented by
\begin{align*}
\Upsilon (\alpha ): = \sum\limits_{{i_1} = 1}^r {\sum\limits_{{i_2} = 1}^r { \cdots \sum\limits_{{i_N} = 1}^r {{\alpha _{{i_1}}}{\alpha _{{i_2}}} \cdots {\alpha _{{i_N}}}{\Upsilon _{{i_1}{i_2} \cdots {i_N}}}} } }  \prec 0
\end{align*}
for all $\alpha  \in {\Lambda _r}$ is called the multi-dimensional or $N$-dimensional fuzzy summation.
When $N=1$, it is called the single fuzzy summation, while when $N=2$, it is called the double fuzzy summation.
\end{definition}

PMIs or PLMIs are infinite-dimensional conditions, which pose challenges for numerical verification. Over recent decades, numerous techniques have been developed to transform them into finite-dimensional sufficient LMI conditions. Specifically, for PLMIs involving single fuzzy summations, the subsequent LMI relaxation technique can be employed.
\begin{lemma}[\cite{wang1996approach}]\label{thm:relax0}
Let us consider any symmetric matrices $\Upsilon _{i} \in {\mathbb R}^{n\times n},i \in {\cal I}_r$.
The following PMI holds:
\begin{align*}
\Upsilon (\alpha ): = \sum_{i = 1}^r {{\alpha _i}\Upsilon _{i} }  \prec 0,\quad \forall \alpha  \in {\Lambda _r}
\end{align*}
if $\Upsilon _{i} \prec 0$ holds for all $i \in {\cal I}_r$.
\end{lemma}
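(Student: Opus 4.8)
The plan is to reduce the matrix inequality $\Upsilon(\alpha) \prec 0$ to a scalar statement about its associated quadratic form and then exploit the defining properties of the unit simplex ${\Lambda _r}$. Since negative definiteness of a symmetric matrix $M$ is by definition equivalent to $v^T M v < 0$ for every nonzero $v \in {\mathbb R}^n$, I would fix an arbitrary $\alpha \in {\Lambda _r}$ and an arbitrary nonzero $v \in {\mathbb R}^n$, and aim to show $v^T \Upsilon(\alpha) v < 0$.

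The key steps are as follows. First, using linearity of the quadratic form in the summation, I would write
\begin{align*}
v^T \Upsilon(\alpha) v = \sum_{i=1}^r \alpha_i \, v^T \Upsilon_i v.
\end{align*}
Second, the hypothesis $\Upsilon_i \prec 0$ gives $v^T \Upsilon_i v < 0$ for every $i \in {\cal I}_r$, precisely because $v \neq 0$. Third, membership $\alpha \in {\Lambda _r}$ enforces $\alpha_i \ge 0$ for all $i$ together with $\sum_{i=1}^r \alpha_i = 1$, so the right-hand side is a nonnegatively weighted sum of strictly negative numbers in which at least one weight is strictly positive. Consequently the sum is strictly negative, which yields $v^T \Upsilon(\alpha) v < 0$. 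Since $v$ and $\alpha$ were arbitrary, $\Upsilon(\alpha) \prec 0$ holds for all $\alpha \in {\Lambda _r}$, completing the argument.

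The only point requiring a little care --- and the closest thing to an obstacle in an otherwise routine argument --- is preserving the \emph{strict} inequality when some coordinates $\alpha_i$ vanish. A naive bound would only deliver $v^T \Upsilon(\alpha) v \le 0$, which is insufficient for negative definiteness. Strictness is recovered from the normalization constraint $\sum_{i=1}^r \alpha_i = 1$, which guarantees that the weights cannot all be zero, so at least one strictly negative term survives with a positive coefficient. This is also why membership in the full simplex, rather than mere nonnegativity of the weights, is what the statement genuinely uses.
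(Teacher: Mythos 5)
Your proof is correct: the reduction to the quadratic form, the use of $\alpha_i \ge 0$ with $\sum_i \alpha_i = 1$ to guarantee at least one strictly positive weight, and the resulting strict negativity are exactly the standard convexity argument for this lemma. The paper itself states this result by citation to \cite{wang1996approach} without reproducing a proof, so there is nothing to compare against, but your argument is complete and handles the one delicate point (strictness when some $\alpha_i$ vanish) correctly.
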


For PLMIs that incorporate double fuzzy summations, the following LMI relaxation technique can be applied.
\begin{lemma}[\cite{tanaka1998fuzzy}]\label{thm:relax1}
Let us consider any symmetric matrices $\Upsilon _{ij} \in {\mathbb R}^{n\times n},(i,j) \in {\cal I}_r \times {\cal I}_r$.
The following PMI holds:
\begin{align*}
\Upsilon (\alpha ): = \sum_{i = 1}^r {\sum_{j = 1}^r {{\alpha _i}{\alpha _j}{\Upsilon _{ij}}} }  \prec 0,\quad \forall \alpha  \in {\Lambda _r}
\end{align*}
if $\Upsilon _{ij} + \Upsilon _{ji} \prec 0$ holds for all $\forall (i,j) \in {\cal I}_r \times {\cal I}_r, i \ge j$.
\end{lemma}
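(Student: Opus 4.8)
The plan is to prove this by a straightforward symmetrization of the double fuzzy summation, reducing it to a nonnegative (conic) combination of the hypothesized negative definite matrices. First I would separate the diagonal terms ($i=j$) from the off-diagonal terms ($i\neq j$), writing
\[
\Upsilon(\alpha) = \sum_{i=1}^r \alpha_i^2\, \Upsilon_{ii} + \sum_{\substack{i,j=1 \\ i\neq j}}^r \alpha_i \alpha_j\, \Upsilon_{ij}.
\]
The key algebraic step is to pair each off-diagonal index pair $(i,j)$ with its transpose $(j,i)$: relabeling the summation indices in the $i<j$ portion and using $\alpha_i\alpha_j=\alpha_j\alpha_i$ collapses the off-diagonal sum, yielding the decomposition
\[
\Upsilon(\alpha) = \sum_{i=1}^r \alpha_i^2\, \Upsilon_{ii} + \sum_{i>j} \alpha_i \alpha_j\, (\Upsilon_{ij} + \Upsilon_{ji}).
\]

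Next I would observe that the single hypothesis $\Upsilon_{ij}+\Upsilon_{ji}\prec 0$ for all $i\geq j$ supplies exactly the two ingredients needed. The case $i=j$ gives $2\Upsilon_{ii}\prec 0$, hence $\Upsilon_{ii}\prec 0$; the case $i>j$ gives $\Upsilon_{ij}+\Upsilon_{ji}\prec 0$ directly. Because $\alpha\in\Lambda_r$ forces $\alpha_i\geq 0$, every coefficient $\alpha_i^2$ and $\alpha_i\alpha_j$ in the decomposition is nonnegative, so $\Upsilon(\alpha)$ is a nonnegative combination of negative definite matrices.

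To finish, I would test against an arbitrary nonzero vector $v\in{\mathbb R}^n$: each quadratic form $v^T\Upsilon_{ii}v$ and $v^T(\Upsilon_{ij}+\Upsilon_{ji})v$ is strictly negative, weighted by a nonnegative coefficient, so immediately $v^T\Upsilon(\alpha)v\leq 0$. The only genuinely delicate point — and the one I expect to be the main obstacle — is upgrading this to \emph{strict} negativity, since a merely nonnegative combination of negative definite matrices is a priori only negative semidefinite (all weights could vanish). Strictness is rescued by the simplex normalization $\sum_{i=1}^r\alpha_i=1$, which guarantees that at least one index $k$ has $\alpha_k>0$; the corresponding diagonal contribution $\alpha_k^2\, v^T\Upsilon_{kk}v<0$ is then strictly negative, forcing $v^T\Upsilon(\alpha)v<0$. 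As this holds for every $v\neq 0$ and every $\alpha\in\Lambda_r$, we conclude $\Upsilon(\alpha)\prec 0$, completing the argument.
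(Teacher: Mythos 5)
Your proof is correct and is the standard symmetrization argument for this classical relaxation: the paper itself gives no proof, citing Tanaka et al.\ (1998), whose argument is exactly the decomposition you use. The two points you single out — extracting $\Upsilon_{ii}\prec 0$ from the $i\ge j$ hypothesis at $i=j$, and using $\sum_{i}\alpha_i=1$ to guarantee a strictly negative diagonal term so that the nonnegative combination is strictly (not just semi-) negative definite — are precisely the steps that make the argument complete.
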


\subsection{Quadratic stability}

Let us consider the quadratic Lyapunov function (QLF) candidate $V(x) = x^T P x$, where $P = P^T \succ 0$ is some matrix. Its time-derivative along the solution of~\eqref{eq:nonlinear-system} is given as
\begin{align*}
\dot V(x_t) = \sum_{i = 1}^r {{\alpha _i}(x_t) x_t^T(A_i^TP + P{A_i})x_t}
\end{align*}
Using Lyapunov direct method~\cite{khalil2002nonlinear}, the corresponding stability condition is given by the following Lyapunov matrix inequality:
\begin{align}
P \succ 0,\quad A{(\alpha (x))^T}P + PA(\alpha (x)) \prec 0,\quad x \in {\cal X}.\label{eq:0}
\end{align}

Using the fact that $A(\alpha (x)) \in \{ A(\alpha ) \in {\mathbb R}^{n \times n}:\alpha  \in {\Lambda _r}\}  = {\bf{co}}\{ {A_1},{A_2}, \ldots ,{A_r}\}$, one can derive a sufficient condition to guarantee the Lyapunov matrix inequality in~\eqref{eq:0} summarized below.
\begin{lemma}[{\cite[Theorem~5]{tanaka2004fuzzy}}]\label{lemma:quadratic}
Suppose that there exists a symmetric matrix $P = P^T \in {\mathbb R}^{n\times n}$ such that the following PLMIs hold:
\begin{align}
P \succ 0,\quad A{(\alpha )^T}P + PA(\alpha ) \prec 0,\quad \forall \alpha  \in {\Lambda _r}.\label{eq:3}
\end{align}
Then, the fuzzy system~\eqref{eq:fuzzy-system} is locally asymptotically stable, and a DA is given by any Lyapunov sublevel sets defined by ${L_V}(c): = \{ x \in {\mathbb R}^n:V(x) \le c\}  \subseteq {\cal X}$ with any $c>0$ such that ${L_V}(c) \subseteq {\cal X}$. The largest DA is ${L_V}(c^*)$ with ${c^*}: = \max \{ c > 0:{L_V}(c) \subseteq {\cal X} \}$.
\end{lemma}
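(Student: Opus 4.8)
The plan is to apply Lyapunov's direct method to the QLF candidate $V(x) = x^T P x$, while carefully accounting for the fact that the fuzzy representation~\eqref{eq:fuzzy-system} is valid only on the modelling region ${\cal X}$. First I would note that $P \succ 0$ makes $V$ positive definite with $V(x)=0$ if and only if $x=0$, so $V$ is a legitimate Lyapunov candidate; moreover, since $V$ is a positive-definite quadratic form, every sublevel set $L_V(c) = \{x\in{\mathbb R}^n : V(x)\le c\}$ is a compact ellipsoid. Next I would differentiate $V$ along a solution of~\eqref{eq:fuzzy-system}: for any $x_t \in {\cal X}$,
\begin{align*}
\dot V(x_t) = x_t^T \left( A(\alpha(x_t))^T P + P A(\alpha(x_t)) \right) x_t.
\end{align*}
Because $\alpha(x_t) \in \Lambda_r$ whenever $x_t \in {\cal X}$, evaluating the second PLMI in~\eqref{eq:3} at $\alpha = \alpha(x_t)$ yields $\dot V(x_t) < 0$ for every $x_t \in {\cal X}\setminus\{0\}$.

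The main obstacle is that this negativity holds \emph{only} on ${\cal X}$, where the model genuinely describes~\eqref{eq:nonlinear-system}, so a global Lyapunov argument is not directly available; I must instead confine the trajectory to a set on which the fuzzy dynamics remain valid. The key step is therefore to establish positive invariance of any $L_V(c) \subseteq {\cal X}$. Fix such a $c>0$ and take $x_0 \in L_V(c)$. Letting $[0,T)$ be the maximal interval on which $x_t$ remains in $L_V(c)$, the solution coincides with~\eqref{eq:fuzzy-system} on that interval (as $L_V(c)\subseteq{\cal X}$), whence $\dot V \le 0$ and $V(x_t)\le V(x_0)\le c$. Since $L_V(c)$ is closed, continuity forces $x_T \in L_V(c)$ as well, contradicting maximality unless $T=\infty$; thus the trajectory stays in $L_V(c)\subseteq{\cal X}$ for all $t\ge 0$. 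This self-consistency closes the argument, since it guarantees the fuzzy model is valid along the entire trajectory.

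With invariance in hand the remaining steps are standard. On the compact positively invariant set $L_V(c)$ the derivative $\dot V$ is strictly negative away from the origin, so by the Barbashin--Krasovskii/LaSalle argument (the only invariant set contained in $\{\dot V = 0\}$ is $\{0\}$) every trajectory with $x_0 \in L_V(c)$ satisfies $x_t \to 0$ as $t\to\infty$. This simultaneously establishes local asymptotic stability at the origin and identifies each admissible $L_V(c)$ as a DA. Finally, for the largest estimate I would observe that $L_V(c)$ is monotonically increasing in $c$ and that the admissible set $\{c>0 : L_V(c)\subseteq{\cal X}\}$ is closed from above by a short continuity argument, so the supremum $c^* := \max\{c>0 : L_V(c)\subseteq{\cal X}\}$ is attained and $L_V(c^*)$ is the largest sublevel-set DA obtainable from $V$. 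The only delicate point in the whole proof is the invariance/self-consistency step forced by the local validity of the model; everything else reduces to the classical Lyapunov machinery.
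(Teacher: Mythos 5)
Your proposal is correct and follows essentially the same route as the paper, whose proof simply observes that \eqref{eq:3} implies \eqref{eq:0} and then invokes the standard Lyapunov direct method; you have merely spelled out the details that the paper delegates to \cite{khalil2002nonlinear}, in particular the positive-invariance of sublevel sets $L_V(c)\subseteq{\cal X}$, which is indeed the one point requiring care because the fuzzy model is only valid on ${\cal X}$. No gaps worth flagging.
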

\begin{proof}
It is straightforward to prove that~\eqref{eq:3} implies~\eqref{eq:0}. Then, the Lyapunv method~\cite{khalil2002nonlinear} can be applied to complete the proof.
\end{proof}

Note that~\cref{lemma:quadratic} is a local stability condition in the sense that the DA is a strict subset of the modeling region $\cal X$.
However, traditionally it was called a global condition in the sense that the Lyapunov inequality
\begin{align*}
V(x) > 0,\quad {\nabla _x}V{(x)^T}A(\alpha (x))x < 0,\quad \forall x \in {\cal X}\backslash \{ 0\}
\end{align*}
hold for all $x \in {\cal X}\backslash \{ 0\}$ inside the modeling region $\cal X$.
In~\cref{lemma:quadratic}, the infinite-dimensional PLMIs can be converted to finite-dimensional sufficient LMI condition using~\cref{thm:relax0}.

While the LMI condition offers simplicity, a key drawback of the quadratic stability approach is its inherent conservatism. This conservatism primarily stems from the requirement for a common Lyapunov matrix $P$, which should satisfy the Lyapunov inequality across all subsystems of the fuzzy system. As an alternative, FLFs can be employed.

\subsection{Fuzzy Lyapunov stability}

Let us consider the Lyapunov function candidate
\begin{align}
V(x) = {x^T}\left( {\sum_{i = 1}^r {{\alpha _i}(x)P_i} } \right)x =: {x^T}P(\alpha(x))x,\label{eq:fuzzy-Lyapunov-function}
\end{align}
which is called a FLF, where $P_i = P_i^T \succ 0,i\in {\cal I}_r$.
Taking its time-derivative along the trajectory of~\eqref{eq:fuzzy-system} leads to
\begin{align*}
\dot V(x_t) =x_t^T\left[ {\sum_{i = 1}^r {\sum_{j = 1}^r {{\alpha _i}(x_t){\alpha _j}(x_t)(A_i^T{P_j} + {P_j}{A_i})} } } + {\sum_{k = 1}^r {{{\dot \alpha }_k}(x_t){P_k}} } \right]x_t.
\end{align*}

It is important to note that the use of the FLF presupposes that the MFs are differentiable, an assumption that does not universally hold. Therefore, when employing the FLF, we implicitly assume the differentiability of the MFs. A challenge that emerges in this context is that the stability condition depends on the time-derivatives of the MFs along the solution of~\eqref{eq:fuzzy-system}, which are not readily amenable to convexification. To circumvent this issue, \cite{tanaka2003multiple} proposed incorporating a bound on the time-derivatives of the MFs, $|{\nabla _x}{\alpha _i}(x)A(\alpha (x))x| \le {\varphi _i},i \in {\cal I}_r,x \in {\cal X}$. Building on this bound, a subsequent stability condition is derived as follows.
\begin{lemma}[{\cite[Theorem~1]{tanaka2003multiple}}]\label{lemma:Tanaka2003}
Suppose that the MFs are differentiable.
Moreover, suppose that there exist constants $\varphi_k > 0, k\in {\cal I}_r$, symmetric matrices $P_i = P_i^T \in {\mathbb R}^{n\times n}, i\in {\cal I}_r$ such that the following PLMIs hold:
\begin{align*}
P(\alpha ) \succ 0,\quad \sum_{k = 1}^r {{\varphi _k}{P_k}}  + P(\alpha )A(\alpha ) + A{(\alpha )^T}P(\alpha ) \prec 0
\end{align*}
for all $\alpha  \in {\Lambda _r}$. Then, the fuzzy system~\eqref{eq:fuzzy-system} is locally asymptotically stable, and any Lyapunov sublevelset ${L_V}(c): = \{ x \in {\mathbb R}^n:V(x) \le c\}$ with some $c>0$ such that ${L_V}(c) \subseteq  \Omega (\varphi )$ is a DA, where
\begin{align}
\Omega (\varphi ): = \{ x \in {\cal X}:|{\nabla _x}{\alpha _i}{(x)^T}A(\alpha (x))x| \le {\varphi_i},i \in {\cal I}_r\}\label{eq:set-derivative-bound}
\end{align}
and $\varphi : = {\left[ {\begin{array}{*{20}{c}}
{{\varphi _1}}&{{\varphi _2}}& \cdots &{{\varphi _r}}
\end{array}} \right]^T}$.
\end{lemma}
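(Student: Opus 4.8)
The plan is to apply the Lyapunov direct method with the FLF $V(x) = x^T P(\alpha(x)) x$ as the candidate. The first step is to confirm that $V$ is a legitimate local Lyapunov function: since $P(\alpha) \succ 0$ for every $\alpha \in \Lambda_r$ and $\alpha(x) \in \Lambda_r$ for all $x \in \mathcal{X}$, we have $V(x) > 0$ for all $x \neq 0$ and $V(0) = 0$, while the assumed differentiability of the MFs guarantees that $V$ is continuously differentiable so that $\dot V$ is well defined along the trajectories of~\eqref{eq:fuzzy-system}.

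Next I would compute $\dot V$ along the solution, reusing the expression already displayed above,
\begin{align*}
\dot V(x_t) = x_t^T\left[ \sum_{i=1}^r \sum_{j=1}^r \alpha_i(x_t)\alpha_j(x_t)(A_i^T P_j + P_j A_i) + \sum_{k=1}^r \dot\alpha_k(x_t) P_k \right] x_t,
\end{align*}
and rewrite the first double sum compactly as $P(\alpha)A(\alpha) + A(\alpha)^T P(\alpha)$. The crucial point is to control the residual term $\sum_{k=1}^r \dot\alpha_k P_k$, which is sign-indefinite because the $\dot\alpha_k$ need not be nonnegative (indeed $\sum_{k} \dot\alpha_k = 0$). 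Here I would invoke the chain rule $\dot\alpha_k = \nabla_x \alpha_k(x)^T A(\alpha(x)) x$, so that restricting attention to $x \in \Omega(\varphi)$ yields the scalar bound $|\dot\alpha_k| \le \varphi_k$ directly from the definition~\eqref{eq:set-derivative-bound}.

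The main obstacle, and the heart of the argument, is to convert the scalar bound $|\dot\alpha_k| \le \varphi_k$ into a matrix inequality on $\sum_{k} \dot\alpha_k P_k$. The key observation is that, since $P_k \succ 0$ and $\varphi_k > 0$, one has $\dot\alpha_k P_k \preceq \varphi_k P_k$ termwise: when $\dot\alpha_k \ge 0$ this follows from $\dot\alpha_k \le \varphi_k$, and when $\dot\alpha_k < 0$ it follows from $\dot\alpha_k P_k \prec 0 \preceq \varphi_k P_k$. Summing over $k$ gives $\sum_{k=1}^r \dot\alpha_k P_k \preceq \sum_{k=1}^r \varphi_k P_k$, and hence for $x \in \Omega(\varphi)$,
\begin{align*}
\dot V(x) \le x^T\left[ P(\alpha)A(\alpha) + A(\alpha)^T P(\alpha) + \sum_{k=1}^r \varphi_k P_k \right] x < 0, \quad x \neq 0,
\end{align*}
where the strict negativity is exactly the second assumed PLMI.

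Finally I would close the argument with a standard invariance step. Because $\dot V < 0$ on $\Omega(\varphi) \setminus \{0\}$, any sublevel set $L_V(c) \subseteq \Omega(\varphi)$ is positively invariant: a trajectory starting in $L_V(c)$ cannot increase $V$, so it remains in $L_V(c)$ and therefore never leaves $\Omega(\varphi)$, which keeps $\dot V < 0$ valid for all $t \ge 0$. Local asymptotic stability and the claim that $L_V(c)$ is a DA then follow from the Lyapunov direct method. The only genuinely delicate point in the whole proof is the termwise matrix bound above; everything else is routine once the region $\Omega(\varphi)$ is used to localize the derivative constraint.
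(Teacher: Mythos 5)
Your proof is correct, and it is essentially the standard argument for this result: the paper itself states \cref{lemma:Tanaka2003} as a cited theorem from \cite{tanaka2003multiple} without giving a proof, and your derivation (positive definiteness of $V$ from $P(\alpha)\succ 0$, the chain rule $\dot\alpha_k=\nabla_x\alpha_k(x)^TA(\alpha(x))x$, the termwise bound $\sum_k\dot\alpha_kP_k\preceq\sum_k\varphi_kP_k$ valid on $\Omega(\varphi)$ because each $P_k=P(e_k)\succ 0$, and the invariance of sublevel sets contained in $\Omega(\varphi)$) is exactly how that theorem is established. The only point worth adding for completeness is the standard requirement that the sublevel set $L_V(c)$ be compact so that boundedness of the trajectory and convergence to the origin follow from the Lyapunov direct method; this is glossed over in the paper as well.
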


In~\cref{lemma:Tanaka2003}, $\Omega (\varphi )$ is a subset of $\cal X$ such that the time-derivatives of the MFs are bounded within certain intervals. If the LMIs in~\cref{lemma:Tanaka2003} are feasible, then we have
\begin{align*}
&V(x)> 0,\quad \forall x \in {\cal X} \backslash \{ 0\},\\
&{\nabla _x}V{(x)^T}A(\alpha (x))x < 0,\quad \forall x \in \Omega (\varphi )\backslash \{ 0\}.
\end{align*}
Here, note that the Lyapunov inequality holds only within $\Omega (\varphi )\backslash \{ 0\}$ rather than ${\cal X} \backslash \{ 0\}$ due to the constraints on the time-derivatives of the MFs. This is one of the differences between the QLF approach and FLF approach. In the FLF approach, the imposed bounds $|{\nabla _x}{\alpha _i}{(x)^T}A(\alpha (x))x| \le {\varphi _i},i \in {\cal I}_r$ imply that the matrix $A(\alpha(x_t))$ varies at a sufficiently slow rate along the solution of~\eqref{eq:fuzzy-system}.

In~\cite{mozelli2009reducing}, an improved (less conservative) LMI condition has been proposed using the null effect
\begin{align}
\sum\limits_{k = 1}^r {{{\dot \alpha }_k}(x_t)}  = 0.\label{eq:slack1}
\end{align}
Using this property, one can add the slack term $\sum_{k = 1}^r {{{\dot \alpha }_k}(x_t)}M   = 0$ with any slack matrix variable $M$. For convenience and completeness of the presentation, the condition is formally introduced below.
\begin{lemma}[{\cite[Theorem~6]{mozelli2009reducing}}]\label{lemma:Mozelli2009}
Suppose that the MFs are continuously differentiable.
Moreover, suppose that there exist constants $\varphi_k > 0,k\in {\cal I}_r$, symmetric matrices $P_i = P_i^T \in {\mathbb R}^{n\times n}, i\in {\cal I}_r$ and $M = M^T \in {\mathbb R}^{n\times n}$ such that the following LMIs hold:
\begin{align}
&P(\alpha) \succ 0,\quad P(\alpha) + M \succeq 0,\quad \forall \alpha  \in {\Lambda _r}\nonumber\\
&\sum\limits_{k = 1}^r {{\varphi _k}({P_k} + M)}  + P(\alpha )A(\alpha ) + A{(\alpha )^T}P(\alpha ) \prec 0,\quad \forall \alpha  \in {\Lambda _r}\label{eq:4}
\end{align}
Then, the fuzzy system~\eqref{eq:fuzzy-system} is locally asymptotically stable, and any Lyapunov sublevelset ${L_V}(c)$ with some $c>0$ such that ${L_V}(c) \subseteq \Omega(\varphi)$ is a DA.
\end{lemma}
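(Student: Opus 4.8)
The plan is to run the Lyapunov direct method with the FLF $V(x) = x^T P(\alpha(x)) x$ from~\eqref{eq:fuzzy-Lyapunov-function}, the only new ingredient relative to~\cref{lemma:Tanaka2003} being the exploitation of the null effect~\eqref{eq:slack1} through the slack matrix $M$. First I would record that the LMI $P(\alpha)\succ0$ makes $V$ positive definite on $\mathcal X$. I would then reduce the second positivity constraint: since $P(\alpha)+M = \sum_{i=1}^r \alpha_i(P_i+M)$, evaluating the hypothesis $P(\alpha)+M\succeq0$ at the simplex vertices $\alpha=e_k$ yields $P_k+M\succeq0$ for every $k\in\mathcal I_r$, and it is this per-vertex form that drives the sign argument below.

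Next I would differentiate $V$ along~\eqref{eq:fuzzy-system}. Collapsing the double sum via $\sum_{i,j}\alpha_i\alpha_j(A_i^TP_j+P_jA_i)=A(\alpha)^TP(\alpha)+P(\alpha)A(\alpha)$ gives
\[
\dot V(x_t)=x_t^T\Big[P(\alpha)A(\alpha)+A(\alpha)^TP(\alpha)+\sum_{k=1}^r\dot\alpha_k(x_t)P_k\Big]x_t .
\]
The crucial move is to add the null term $\sum_{k=1}^r\dot\alpha_k(x_t)M=0$, legitimate by~\eqref{eq:slack1}, which replaces $P_k$ by $P_k+M$ inside the bracket without altering $\dot V$. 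I would then restrict to $x_t\in\Omega(\varphi)$, where by the very definition~\eqref{eq:set-derivative-bound} one has $|\dot\alpha_k|=|\nabla_x\alpha_k^TA(\alpha)x|\le\varphi_k$, hence $\varphi_k-\dot\alpha_k\ge0$. Combining this with $P_k+M\succeq0$ shows each $(\varphi_k-\dot\alpha_k)(P_k+M)\succeq0$, so that
\[
\sum_{k=1}^r\dot\alpha_k(P_k+M)\preceq\sum_{k=1}^r\varphi_k(P_k+M).
\]
Substituting this bound and invoking the second LMI in~\eqref{eq:4} delivers $\dot V(x_t)<0$ for all $x_t\in\Omega(\varphi)\setminus\{0\}$.

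Finally I would upgrade these pointwise inequalities to a stability conclusion, and this is where the main obstacle lies: the estimate $\dot V<0$ is valid only inside $\Omega(\varphi)$, so I must establish invariance of the candidate region. Choosing $c>0$ with $L_V(c)\subseteq\Omega(\varphi)$, any trajectory starting in $L_V(c)$ has $V$ nonincreasing as long as it stays in $\Omega(\varphi)$; therefore it cannot cross the boundary level set $\{V=c\}$ and remains trapped in $L_V(c)\subseteq\Omega(\varphi)$ for all $t\ge0$, so the bound holds along the entire solution. With this trapping argument in hand, positive definiteness of $V$ together with strict negativity of $\dot V$ away from the origin lets the standard Lyapunov argument conclude local asymptotic stability with $L_V(c)$ as a DA. The only genuinely delicate points are this invariance/trapping step and the vertex reduction of $P(\alpha)+M\succeq0$; the remaining manipulations are routine.
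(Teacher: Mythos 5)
Your proof is correct and follows exactly the route the paper intends for this cited result: the paper gives no proof of its own (it attributes the lemma to Mozelli et al.), but the surrounding text explicitly identifies the key idea as adding the null term $\sum_{k=1}^r \dot\alpha_k(x_t) M = 0$ to the FLF derivative from the preceding subsection, which is precisely what you do. Your vertex reduction of $P(\alpha)+M\succeq 0$, the bound $\sum_k \dot\alpha_k(P_k+M) \preceq \sum_k \varphi_k(P_k+M)$ on $\Omega(\varphi)$, and the invariance argument for $L_V(c)\subseteq\Omega(\varphi)$ are all sound and standard.
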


As $\varphi_i$ approaches zero for $i\in {\cal I}_r$, the associated PLMI condition in~\cref{lemma:Mozelli2009} become less conservative. However, this reduction in conservatism comes at a cost: the set $\Omega(\varphi )$ contracts towards the origin, leading to a corresponding reduction in the DA. This situation shows a trade-off between the level of conservatism in the stability conditions and the volume of the DA. In essence, achieving less conservative conditions results in a more restricted DA.

\section{Augmented system}\label{sec:augmented-system}

\subsection{Fuzzy model of Jacobian}

In this paper, we propose a new stability condition, which is derived based on a new augmented Lyapunov function that is an extension of the FLFs. A main feature of the conditions is that they do not depend on the bounds on the time-derivatives of the MFs. Instead, we consider the Jacobian of the MF vector in~\eqref{eq:alpha-vec} as follows:
\begin{align*}
J(x): = {\nabla _x}\alpha (x) \in {\mathbb R}^{r \times n},\quad x \in {\cal X}.
\end{align*}

In our stability analysis, we will use the fuzzy modeling of the Jacobian
\begin{align*}
J(x) = \sum_{i = 1}^p {{\beta _i}(x){J_i}}  = :J(\beta (x)),\quad x \in {\cal X}
\end{align*}
where $J_i \in {\mathbb R}^{r \times
n}, i\in \{1,2,\ldots, p \} = {\cal I}_p$, and the vector
\begin{align*}
\beta (x): = \left[ {\begin{array}{*{20}{c}}
{{\beta_1}(x)}\\
{{\beta_2}(x)}\\
 \vdots \\
{{\beta_p}(x)}
\end{array}} \right] \in {\Lambda _p}
\end{align*}
is the MF corresponding to the Jacobian. One can say that the Jacobian matrix varies within the convex set
\[J(x) \in {\bf{co}}\{ {J_1},{J_2}, \ldots ,{J_p}\} ,\quad x \in {\cal X}\]
where ${\bf{co}}\{  \cdot \} $ is the convex hull~\cite{boyd1994linear}.
Note that here, we assume that one can obtain the fuzzy model of the Jacobian matrix within the same modeling region $\cal X$ of the underlying fuzzy system, which is the case in most instances.
\begin{example}\label{ex:2}
Let us consider the system in~\eqref{ex:1} again. The Jacobian matrix of $\alpha(x)$ in~\eqref{eq:alpha-vec} is
calculated as follows:
\begin{align*}
&J(x)  = \frac{1}{2}\cos x_1 \left[ {\begin{array}{*{20}c}
   1 & 0  \\
   { - 1} & 0  \\
\end{array}} \right] \in {\rm{co}}\{ J_1 ,\,J_2 \} ,\quad \forall x \in {\cal X},
\end{align*}
where
\begin{align*}
&J_1  = \left[ {\begin{array}{*{20}c}
   0 & 0  \\
   0 & 0  \\
\end{array}} \right],\quad J_2  = \left[ {\begin{array}{*{20}c}
   {0.5} & 0  \\
   { - 0.5} & 0  \\
\end{array}} \right].
\end{align*}
\end{example}

\subsection{Augmented system model}

The time-derivative of the MF in~\eqref{eq:alpha-vec} can be written as
\begin{align}
\dot \alpha ({x_t}) = J(\beta(x_t))A(\alpha ({x_t})) x_t,\label{eq:membership-system}
\end{align}
which can be interpreted as another nonlinear system that depends on the state $x_t$.
Therefore, the underlying system~\eqref{eq:fuzzy-system} and~\eqref{eq:membership-system} can be seen as an interconnected system.
In particular, by incorporating it into the original system, we can obtain the augmented system
\begin{align*}
\frac{d}{{dt}}\left[ {\begin{array}{*{20}{c}}
{{x_t}}\\
{\alpha ({x_t}) - \alpha (0)}
\end{array}} \right]= \left[ {\begin{array}{*{20}{c}}
{A(\alpha ({x_t}))}&0\\
{J(\beta(x_t))A(\alpha ({x_t}))}&0
\end{array}} \right]\left[ {\begin{array}{*{20}{c}}
{{x_t}}\\
{\alpha ({x_t}) - \alpha (0)}
\end{array}} \right]
\end{align*}
Here, $\alpha ({x_t}) - \alpha (0)$ is used instead of $\alpha ({x_t})$ as state variables in order to make the origin an equilibrium point. Next, defining
\begin{align*}
A(\alpha (x)) =& \sum\limits_{i = 1}^r {{\alpha _i}(x){A_i}}  = \sum\limits_{i = 1}^r {({\alpha _i}(x) - {\alpha _i}(0)){A_i}}  + \sum\limits_{i = 1}^r {{\alpha _i}(0){A_i}} \\
=& \sum\limits_{i = 1}^r {({\alpha _i}(x) - {\alpha _i}(0)){A_i}}  + {A_0}
\end{align*}
we write
\begin{align*}
\frac{d}{{dt}}\left[ {\begin{array}{*{20}{c}}
{{x_t}}\\
{\alpha ({x_t}) - \alpha (0)}
\end{array}} \right]= \left[ {\begin{array}{*{20}{c}}
{\sum\limits_{i = 1}^r {({\alpha _i}({x_t}) - {\alpha _i}(0)){A_i}}  + {A_0}}&0\\
{J(\beta(x_t))\sum\limits_{i = 1}^r {({\alpha _i}({x_t}) - {\alpha _i}(0)){A_i}}  + {A_0}}&0
\end{array}} \right] \left[ {\begin{array}{*{20}{c}}
{{x_t}}\\
{\alpha ({x_t}) - \alpha (0)}
\end{array}} \right]
\end{align*}
which is an interconnection between two nonlinear systems as illustrated in~\cref{fig:2}.
\begin{figure}
\centering
\includegraphics[width=13cm]{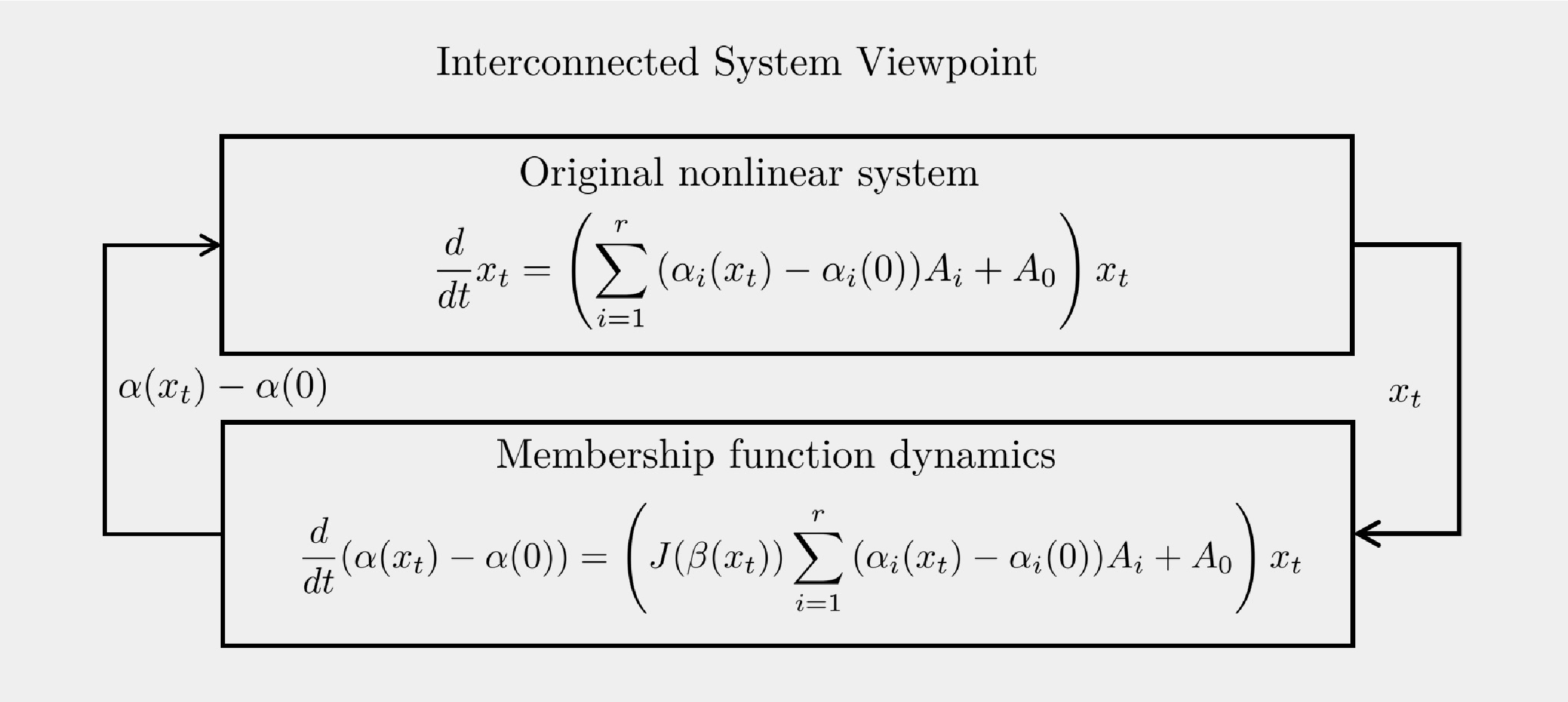}
\caption{Interconnected system viewpoint of T-S fuzzy system}\label{fig:2}
\end{figure}

The system can be viewed as a nonlinear quadratic system in the augmented state
\[\left[ {\begin{array}{*{20}{c}}
{{x_t}}\\
{\alpha ({x_t}) - \alpha (0)}
\end{array}} \right] = :{\xi _t}\]
under the special constraints on the state ${\bf{1}}_r^T (\alpha (x) - \alpha (0)) = 0$.
Therefore, techniques in the stability analysis of the quadratic system can be considered.
While the augmented system mentioned above offers promising potential for the development of new stability analyses, this paper will focus on a different augmented system. However, the perspective of the interconnected system in~\cref{fig:2} provides a basis for the main development in this paper.

In particular, let us define the function $\xi :{\mathbb R}^n \to {{\mathbb R}^{n + rn + {n^2}}}$ as
\[\xi (x): = \left[ {\begin{array}{*{20}{c}}
x\\
{\alpha (x) \otimes x}\\
{x \otimes x}
\end{array}} \right].\]
Then, we will consider the augmented state vector
\begin{align}
{\xi _t}: =\xi (x_t) = \left[ {\begin{array}{*{20}{c}}
{{x_t}}\\
{\alpha ({x_t}) \otimes {x_t}}\\
{{x_t} \otimes {x_t}}
\end{array}} \right] \in {\mathbb R}^{n + rn + n^2}.\label{eq:augmented-state}
\end{align}

Its time-derivative along the solution of~\eqref{eq:fuzzy-system} is given by
\begin{align}
\frac{d}{{dt}}\left[ {\begin{array}{*{20}{c}}
{{x_t}}\\
{\alpha ({x_t}) \otimes {x_t}}\\
{{x_t} \otimes {x_t}}
\end{array}} \right] = \underbrace {\left[ {\begin{array}{*{20}{c}}
{A(\alpha ({x_t}))}&{{0_{n \times rn}}}&{{0_{n \times {n^2}}}}\\
{{0_{rn \times n}}}&{{I_r} \otimes A(\alpha ({x_t}))}&{J(\beta ({x_t}))A(\alpha ({x_t})) \otimes {I_n}}\\
{{0_{{n^2} \times n}}}&{{0_{{n^2} \times rn}}}&{{I_n} \otimes A(\alpha ({x_t})) + A(\alpha ({x_t})) \otimes {I_n}}
\end{array}} \right]}_{ = :\Phi ({x_t}) \in { {\mathbb R}^{(n + rn + {n^2}) \times (n + rn + {n^2})}}}\underbrace {\left[ {\begin{array}{*{20}{c}}
{{x_t}}\\
{\alpha ({x_t}) \otimes {x_t}}\\
{{x_t} \otimes {x_t}}
\end{array}} \right]}_{= :\xi(x_t) = :\xi _t}.\label{eq:augmented-system}
\end{align}

At first glance, the construction of the augmented state vector as shown in~\eqref{eq:augmented-state} may seem unusual. However, we will subsequently elucidate the motivations behind considering this augmented state vector. First of all, the augmented state vector $\xi (x_t)$ is constructed with the following constraints:
\begin{enumerate}
\item It includes both state variables of the original system and the MFs.

\item The corresponding differential equation has the form of the nonlinear system in~\eqref{eq:nonlinear-system}.
\end{enumerate}

To develop a augmented state vector that satisfies the above constraints, we note that
\begin{enumerate}
\item the time-derivative of $x_t$ includes $\alpha ({x_t}) \otimes {x_t}$. Therefore, $\alpha ({x_t}) \otimes {x_t}$ must be involved in the augmented state vector;

\item Moreover, the time-derivative of $\alpha ({x_t}) \otimes {x_t}$ includes both $\alpha ({x_t}) \otimes {x_t}$ and ${x_t} \otimes {x_t}$. Therefore, ${x_t} \otimes {x_t}$ must be involved in the augmented state vector.
\end{enumerate}

For these reasons, we consider the specific augmented state vector in~\eqref{eq:augmented-state} in this paper. Note also that in this formulation, $\alpha(x)$ is no more the MFs, but it can be viewed as a part of the state variables.

The main advantages of considering the new state vector in~\eqref{eq:augmented-state} are summarized as follows:
\begin{enumerate}
\item In this formulation, $\alpha(x)$ is no more MFs, but it can be viewed as a part of the state variables. Therefore, we no more need to consider time-derivatives of MFs which are intricate to deal with because the derivatives are naturally included in the differential equation formulation.

\item Moreover, useful information on MFs is now naturally incorporated in the new nonlinear dynamic system in~\eqref{eq:augmented-system}. Therefore, this can potentially reduce conservativeness of the resulting condition.
\end{enumerate}

It is straightforward to prove that the stability of~\eqref{eq:augmented-system} is equivalent to the stability of~\eqref{eq:fuzzy-system}.
\begin{proposition}\label{stability-identity}
The system~\eqref{eq:fuzzy-system} is locally asymptotically stable if and only if~\eqref{eq:augmented-system} is locally asymptotically stable.
\end{proposition}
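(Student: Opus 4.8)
The plan is to exploit the fact that the augmented state is a \emph{static} function of the original state, so that the trajectories of the two systems are in one-to-one correspondence and the two notions of stability transfer through the continuity of this function at the origin.

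First I would record the structural observation that underlies everything: by construction, \eqref{eq:augmented-system} is exactly the time-derivative of $\xi(x_t)$ taken along solutions of \eqref{eq:fuzzy-system}. Consequently, if $x_t$ solves \eqref{eq:fuzzy-system} then $\xi_t := \xi(x_t)$ solves \eqref{eq:augmented-system}, and conversely the first $n$ components of any solution $\xi_t$ of \eqref{eq:augmented-system} obey $\dot x_t = A(\alpha(x_t))x_t$ (the first block row of $\Phi$) and hence solve \eqref{eq:fuzzy-system}. Thus the maps $x_t \mapsto \xi(x_t)$ and the projection $\pi(\xi_t) := (\text{first } n \text{ entries of } \xi_t)$ give mutually inverse correspondences between trajectories, with $\pi \circ \xi = \mathrm{id}$, $\xi(0)=0$, and $\pi(0)=0$. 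Because the MFs are continuous, the map $\xi(\cdot)$ is continuous at the origin.

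For the direction $(\Leftarrow)$, suppose \eqref{eq:augmented-system} is locally asymptotically stable, and let a neighborhood of $0$ in $\mathbb{R}^{n+rn+n^2}$ witness this. Continuity of $\xi$ at $0$ lets me pull back a neighborhood $N$ of the origin in $\mathbb{R}^n$ so that $x_0 \in N$ forces $\xi(x_0)$ into that neighborhood; then $\xi_t = \xi(x_t) \to 0$, and reading off the first block gives $x_t \to 0$, so \eqref{eq:fuzzy-system} is locally asymptotically stable. For $(\Rightarrow)$, suppose \eqref{eq:fuzzy-system} is locally asymptotically stable with neighborhood $N$; for $x_0 \in N$ we have $x_t \to 0$, whence $\xi_t = \xi(x_t) \to \xi(0) = 0$ by continuity, giving convergence of the augmented trajectory.

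The step I expect to require the most care is the precise meaning of local asymptotic stability for the augmented system, since a priori its state ranges over all of $\mathbb{R}^{n+rn+n^2}$ whereas the trajectories of interest lie on the image manifold $\{\xi(x) : x \in \mathcal{X}\}$. I would address this by noting that the first block row decouples and reproduces \eqref{eq:fuzzy-system} verbatim, so that the relevant initial data are exactly those of the form $\xi(x_0)$; restricting the witnessing neighborhoods to this manifold (equivalently, taking initial conditions consistent with the identification $\xi_t = \xi(x_t)$) makes both implications go through, and this is the sense in which the equivalence is to be understood.
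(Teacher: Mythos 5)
Your proposal is correct and follows essentially the same route as the paper's proof: both exploit the static correspondence $x \mapsto \xi(x)$ and the projection onto the first $n$ components to transfer neighborhoods and convergence between the two systems. Your version is slightly more careful than the paper's (you make the continuity of $\xi$ at the origin and the restriction to initial data on the image set $\{\xi(x): x\in{\cal X}\}$ explicit, where the paper leaves these implicit in its definitions of ${\cal D}'$), but the underlying argument is the same.
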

\begin{proof}
For the sufficiency, let us assume that~\eqref{eq:augmented-system} is locally asymptotically stable, i.e., with $\xi(x_0) \in {\cal D}$ in some region ${\cal D} \subseteq {\mathbb R}^{n + rn + n^2}$ including the origin $0 \in {\mathbb R}^{n + rn + n^2}$, $\xi(x_t) \to 0$ as $t\to \infty$. Then, this obviously implies that with $x_0 \in {\cal D}'$, $x_t \to 0$ as $t \to \infty$, where the set ${\cal D}' \subseteq {\mathbb R}^n$ is defined as ${\cal D}': = \{ x \in {\cal X}:\xi(x) \in {\cal D}\}$. Therefore,~\eqref{eq:fuzzy-system} is also locally asymptotically stable. Conversely, suppose that~\eqref{eq:fuzzy-system} is locally asymptotically stable, which implies that starting from $x_0 \in {\cal D}$ for some set ${\cal D} \subseteq {\cal X} \subseteq {\mathbb R}^{n}$ including the origin, $x_t \to 0$ as $t\to \infty$. This means that starting from $\xi(x_0) \in {\cal D}'$, where ${\cal D}' = \{ \xi (x) \in {\mathbb R}^{n + rn + {n^2}}:x \in {\cal D}\}$ includes the origin, $\xi _t\to 0$ as $t \to \infty$. Therefore,~\eqref{eq:augmented-system} is locally asymptotically stable. This completes the proof.
\end{proof}

Building on this result, this paper will focus on the stability of \eqref{eq:augmented-system} rather than that of \eqref{eq:fuzzy-system}. This approach offers some benefits, as demonstrated in the subsequent sections.

\section{Main results}\label{sec:stablity}

In this paper, for the augmented system in~\eqref{stability-identity}, we consider the following augmented Lyapunov function candidate:
\begin{align}
V(x) = {\left[ {\begin{array}{*{20}{c}}
x\\
{\alpha (x) \otimes x}\\
{x \otimes x}
\end{array}} \right]^T}P\left[ {\begin{array}{*{20}{c}}
x\\
{\alpha (x) \otimes x}\\
{x \otimes x}
\end{array}} \right] = :{\xi ^T}P\xi  = :V(\xi),\label{eq:Lyapunov-function1}
\end{align}
where $\xi : =\xi(x) : = \left[ {\begin{array}{*{20}{c}}
x\\
{\alpha (x) \otimes x}\\
{x \otimes x}
\end{array}} \right] \in {\mathbb R}^{n + rn + {n^2}}$, $P = P^T \in {\mathbb R}^{(n+rn+n^2) \times (n+rn+n^2)}$ and $P \succ 0$.
We note that it is quadratic in $\xi(x)$, non-quadratic in $x$, and is more general than the FLF in~\eqref{eq:fuzzy-Lyapunov-function}. Its time-derivative along the solution of~\eqref{eq:augmented-system} is given by
\begin{align*}
\dot V(x_t) = \xi(x_t)^T[P\Phi ({x_t}) + \Phi {({x_t})^T}P] \xi(x_t),
\end{align*}
and the corresponding Lyapunov matrix inequality is given by
\begin{align}
P \succ 0,\quad P\Phi (x) + \Phi {(x)^T}P \prec 0,\quad \forall x \in {\cal X}.\label{eq:1}
\end{align}

Since the MFs' dynamics is incorporated in~\eqref{eq:augmented-system}, the corresponding Lyapunov matrix inequality in~\eqref{eq:1} does not involve the time-derivatives of the MFs:~\eqref{eq:1} is a derivative-free condition, and does not require bounds on the time-derivatives of the MFs anymore.

At this point, advantages of considering the augmented system in~\eqref{eq:augmented-system} become clearer, and the related remarks are in order.
\begin{enumerate}
\item First of all, the use of the augmented system in~\eqref{eq:augmented-system} effectively converts the stability problem of~\eqref{eq:nonlinear-system} based on nonlinear FLFs into those based on simpler quadratic Lyapunov function in~\eqref{eq:Lyapunov-function1}.

\item Moreover, although~\eqref{eq:Lyapunov-function1} is a simpler quadratic Lyapunov function, it can be proved that it is more general than FLFs given in~\eqref{eq:fuzzy-Lyapunov-function}.
\end{enumerate}

Next, to find a solution of~\eqref{eq:1} for $P$, one needs to solve the infinite-dimensional matrix inequality in~\eqref{eq:1}, which is in general very hard and numerically intractable.
To resolve this issue, we can first use the special structures
\[A(\alpha ({x_t})) \in \{ A(\alpha ) \in {\mathbb R}^{n \times n}:\alpha  \in {\Lambda _r}\}  = {\bf{co}}\{ {A_1},{A_2}, \ldots ,{A_r}\} \]
and
\[J(\beta ({x_t})) \in \{ J(\beta ) \in {\mathbb R}^{r \times n}:\beta  \in {\Lambda _p}\}  = {\bf{co}}\{ {J_1},{J_2}, \ldots ,{J_p}\} \]
In particular, defining
\begin{align*}
\Phi (\alpha ,\beta ): = \left[ {\begin{array}{*{20}{c}}
{A(\alpha )}&0&0\\
0&{I_r \otimes A(\alpha )}&{J(\beta )A(\alpha ) \otimes I_n}\\
0&0&{I_n \otimes A(\alpha ) + A(\alpha ) \otimes I_n}
\end{array}} \right]
\end{align*}
depending on $\alpha \in \Lambda_r$ and $\beta \in \Lambda_p$, a sufficient condition can be obtained as follows.
\begin{proposition}\label{prop:1}
Suppose that the MFs are continuously differentiable.
Moreover, suppose that there exists a symmetric matrices $P = P^T \in {\mathbb R}^{(n+rn+n^2)\times (n+rn+n^2)}$ such that the following PLMIs hold:
\begin{align}
P \succ 0,\quad P\Phi (\alpha ,\beta ) + \Phi {(\alpha ,\beta )^T}P \prec 0\label{eq:2}
\end{align}
for all $(\alpha ,\beta ) \in {\Lambda _r} \times {\Lambda _p}$.
Then, the fuzzy system~\eqref{eq:fuzzy-system} is locally asymptotically stable, and Lyapunov sublevelset ${L_V}(c): = \{ x \in {\mathbb R}^n:V(x) \le c\}$ with some $c>0$ such that ${L_V}(c) \subseteq  {\cal X}$ is a DA.
\end{proposition}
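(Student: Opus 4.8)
The plan is to use $V(x)=\xi(x)^T P\xi(x)$ from \eqref{eq:Lyapunov-function1} directly as a Lyapunov function for the fuzzy system \eqref{eq:fuzzy-system} and invoke the Lyapunov direct method, after first reducing the parameter-independent PLMI \eqref{eq:2} to the state-dependent inequality \eqref{eq:1}. Equivalently, one could establish asymptotic stability of the augmented system \eqref{eq:augmented-system} and transfer it back through \cref{stability-identity}, but the direct route yields the domain-of-attraction estimate in $x$-space more immediately.

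First I would show that \eqref{eq:2} implies \eqref{eq:1}. For every $x\in{\cal X}$ the realized membership vectors satisfy $\alpha(x)\in\Lambda_r$ and $\beta(x)\in\Lambda_p$, and by the very definition of $\Phi(\alpha,\beta)$ we have $\Phi(x)=\Phi(\alpha(x),\beta(x))$. Hence the matrix inequality in \eqref{eq:1} evaluated at an arbitrary $x\in{\cal X}$ is precisely the instance of \eqref{eq:2} at the pair $(\alpha,\beta)=(\alpha(x),\beta(x))$. Since \eqref{eq:2} is assumed over the whole product simplex $\Lambda_r\times\Lambda_p$, which contains the realizable set $\{(\alpha(x),\beta(x)):x\in{\cal X}\}$, the inequality \eqref{eq:1} holds for all $x\in{\cal X}$. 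I would stress that it is exactly the treatment of $\alpha$ and $\beta$ as \emph{independent} free parameters that turns \eqref{eq:2} into a convex (sufficient, and in general conservative) over-approximation of \eqref{eq:1}.

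Next I would verify the two ingredients of the Lyapunov method. For positive definiteness, note that $\xi(0)=0$ gives $V(0)=0$, while for $x\neq 0$ the vector $\xi(x)$ is nonzero because its first block is $x$; together with $P\succ 0$ this yields $V(x)>0$. The same observation gives the lower bound $V(x)\ge\lambda_{\min}(P)\|x\|^2$, so every sublevel set $L_V(c)$ is bounded, and continuous differentiability of the MFs makes $V$ a $C^1$ function of $x$ and the vector field of \eqref{eq:fuzzy-system} locally Lipschitz, guaranteeing well-posed trajectories. For the decrease condition, differentiating $V$ along \eqref{eq:augmented-system} gives $\dot V(x_t)=\xi_t^T[P\Phi(x_t)+\Phi(x_t)^T P]\xi_t$, and by \eqref{eq:1} the bracket is negative definite for every $x_t\in{\cal X}$, so $\dot V(x_t)<0$ whenever $\xi_t\neq 0$, i.e. whenever $x_t\neq 0$.

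Finally I would convert these pointwise estimates into the claimed domain of attraction. Fix $c>0$ with $L_V(c)\subseteq{\cal X}$ and work with the connected component of $L_V(c)$ containing the origin, which is compact by the bound above. Because $\dot V<0$ on this set away from the origin, $V$ is strictly decreasing along any trajectory started inside, so the trajectory remains in the set (positive invariance) and therefore never leaves ${\cal X}$, keeping the fuzzy representation---and hence \eqref{eq:1}---valid for all $t\ge 0$; the standard Lyapunov asymptotic stability theorem then gives $x_t\to 0$. Local asymptotic stability follows by choosing $c$ small enough that $L_V(c)\subseteq{\cal X}$, which is possible since $0\in{\cal X}$ and $V$ is continuous with $V(0)=0$. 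The \emph{step requiring the most care} is this last one: one must guarantee positive invariance of $L_V(c)$ so that trajectories stay within the modeling region where the fuzzy model, and thus the Lyapunov inequality, is actually defined. The reduction in the first step is logically elementary but is conceptually the crux, as it is where the free-parameter relaxation---and with it the potential conservatism---enters.
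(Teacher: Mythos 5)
Your proposal is correct and follows essentially the same route as the paper: reduce the PLMI \eqref{eq:2} to the state-dependent inequality \eqref{eq:1} by observing that $(\alpha(x),\beta(x))\in\Lambda_r\times\Lambda_p$ for all $x\in{\cal X}$, then apply the Lyapunov direct method to $V(x)=\xi(x)^TP\xi(x)$ (the paper phrases the last step via the augmented system and \cref{stability-identity}, but this is the same computation after pre- and post-multiplying by $\xi(x)^T$). Your added details --- the lower bound $V(x)\ge\lambda_{\min}(P)\|x\|^2$ and the positive-invariance argument keeping trajectories inside ${\cal X}$ --- are points the paper's proof leaves implicit, and they strengthen rather than alter the argument.
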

\begin{proof}
Suppose that the PLMIs in~\eqref{eq:2} hold. Then, $P \succ 0$ implies $V(x) = \xi(x)^T P \xi(x) > 0,\forall x \in {\cal X}\backslash \{ 0\}$.
Moreover, the second condition, $P\Phi (\alpha ,\beta ) + \Phi {(\alpha ,\beta )^T}P \prec 0,\forall (\alpha ,\beta ) \in {\Lambda _r} \times {\Lambda _p}$ implies the Lyapunov matrix inequality $P\Phi (x) + \Phi {(x)^T}P \prec 0,\forall x \in {\cal X}$. Therefore, pre- and post-multiplying the last inequality with $\xi (x)^T$ and its transpose lead to $\xi {(x)^T}(P\Phi (x) + \Phi {(x)^T}P)\xi (x) = {\nabla _x}V{(x)^T}\Phi (x) \xi (x) < 0$ for all $x \in {\cal X}\backslash \{ 0\}$.
Therefore, it ensures that $V(x) > 0,{\nabla _x}V{(x)^T}A(\alpha (x))x < 0,\forall x \in {\cal X}\backslash \{ 0\}$. Then, by the Lyapunov method~\cite{khalil2002nonlinear}, the augmented system~\eqref{eq:augmented-system} is locally asymptotically stable. Finally,~\cref{stability-identity} implies that the fuzzy system~\eqref{eq:fuzzy-system} is locally asymptotically stable as well. This completes the proof.
\end{proof}

Note that the condition~\eqref{eq:2} is still an infinite-dimensional PLMIs over matrix polytopes.
However, for these classes of problems, there are several finite-dimensional LMI relaxations~\cite{tanaka1998fuzzy,tanaka2004fuzzy,kim2000new,Teixeira2003,fang2006new,sala2007asymptotically,sala2007relaxed,oliveira2007parameter,lee2010improvement,kim2023relaxed} with various degrees of conservatism. Therefore,~\eqref{eq:2} can be efficiently solved using these techniques. In particular, since~\eqref{eq:2} are expressed as double fuzzy summations for both $\alpha \in \Lambda_r$ and $\beta \in \Lambda_p$, one can apply~\cref{thm:relax1}.

Although~\cref{prop:1} is promising, an issue should be resolved. Specifically, it can be easily proved that~\cref{prop:1} is not less conservative than the QLF approach in~\cref{lemma:quadratic}. This is attributed to the requirement that for~\eqref{eq:2} to be satisfied, the following condition must be met:
\begin{align}
{P_{11}} \succ 0,\quad A{(\alpha )^T}{P_{11}} + {P_{11}}A(\alpha ) \prec 0,\quad \forall \alpha  \in {\Lambda _r}\label{eq:6}
\end{align}
where $P_{11} = P_{11}^T \in {\mathbb R}^{n\times n}$ is the first $n$-by-$n$ diagonal matrix of $P \in {\mathbb R}^{(n+rn+n^2)\times (n+rn+n^2)}$ in~\eqref{eq:2}. Then, one can observe that~\eqref{eq:6} corresponds to the Lyapunov matrix inequality based on the QLF in~\cref{lemma:quadratic}. This means that for~\eqref{eq:2} to be satisfied, the QLF-based condition in~\cref{lemma:quadratic} should be satisfied. Therefore, the condition in~\cref{prop:1} is no less conservative than the quadratic stability condition in~\cref{lemma:quadratic}. This conclusion is formally stated in the following result.
\begin{proposition}
The condition in~\cref{prop:1} is no less conservative than~the quadratic stability condition in~\cref{lemma:quadratic}.
\end{proposition}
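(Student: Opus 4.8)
The plan is to show that feasibility of the PLMIs~\eqref{eq:2} in~\cref{prop:1} forces feasibility of the quadratic stability PLMIs~\eqref{eq:3} in~\cref{lemma:quadratic}. Since ``no less conservative'' means precisely that every fuzzy system certified by~\cref{prop:1} is also certified by~\cref{lemma:quadratic}, establishing this implication between the two feasibility problems proves the claim. The entire argument rests on extracting the leading $n\times n$ block of $P$ by a congruence transformation and observing that the first block column of $\Phi(\alpha,\beta)$ is the only portion of the dynamics that this extraction sees.

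Concretely, I would first introduce the full-column-rank selection matrix
\[
E := \begin{bmatrix} I_n \\ 0_{(rn+n^2)\times n} \end{bmatrix} \in {\mathbb R}^{(n+rn+n^2)\times n},
\]
which satisfies $E^T P E = P_{11}$, where $P_{11}$ is the leading $n\times n$ block of $P$ already identified in the discussion preceding~\eqref{eq:6}. The key structural identity to record is
\[
\Phi(\alpha,\beta)\,E = E\,A(\alpha),
\]
which holds because the first block column of $\Phi(\alpha,\beta)$ equals $[\,A(\alpha)^T\ \ 0\ \ 0\,]^T$ and, crucially, carries no dependence on $\beta$.

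Next I would apply the congruence $E^T(\,\cdot\,)E$ to the second inequality in~\eqref{eq:2}. Using the identity above, together with $(\Phi E)^T = A(\alpha)^T E^T$, a short computation gives
\[
E^T\big(P\Phi(\alpha,\beta)+\Phi(\alpha,\beta)^T P\big)E = P_{11}A(\alpha) + A(\alpha)^T P_{11}.
\]
Because $E$ has full column rank, congruence of a negative definite matrix is again negative definite, so this yields $A(\alpha)^T P_{11} + P_{11}A(\alpha)\prec 0$ for all $\alpha\in\Lambda_r$; likewise $P\succ 0$ gives $P_{11}=E^T P E\succ 0$. These are exactly the conditions~\eqref{eq:3} of~\cref{lemma:quadratic} with Lyapunov matrix $P_{11}$, so feasibility of~\eqref{eq:2} implies feasibility of~\eqref{eq:3}, which completes the argument.

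The computation is entirely routine once the block structure is exploited, so I do not anticipate a genuine obstacle. The only points deserving care are bookkeeping ones: since the extracted inequality is independent of $\beta$, validity of~\eqref{eq:2} over all $(\alpha,\beta)\in\Lambda_r\times\Lambda_p$ collapses cleanly to a condition on $\alpha$ alone, and one must state the direction of the comparison correctly, namely that feasibility of~\eqref{eq:2} implies feasibility of~\eqref{eq:3} and not the reverse.
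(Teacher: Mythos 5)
Your proposal is correct and takes essentially the same route as the paper: the paper states this proposition without a formal proof, but the preceding discussion around~\eqref{eq:6} is exactly your argument, namely that feasibility of~\eqref{eq:2} forces the leading $n\times n$ block $P_{11}$ of $P$ to satisfy the quadratic stability PLMIs~\eqref{eq:3}. Your congruence with the selection matrix $E$ and the identity $\Phi(\alpha,\beta)E = EA(\alpha)$ simply makes the paper's informal block-extraction observation rigorous.
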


The main source of the conservatism of~\eqref{eq:2} comes from the conversion from~\eqref{eq:1} to~\eqref{eq:2}, which introduces significant conservatism because structural information of the system~\eqref{eq:augmented-system} is lost in~\eqref{eq:2}.
In particular, the major structure information lost is summarized as follows: in~\eqref{eq:2}, $A(\alpha(x))$ is simply replaced with $A(\alpha ),\alpha  \in {\Lambda_r}$, and it loses the fact that $A(\alpha(x))$ depends on the state components $\alpha(x)$:
\begin{align}
A(\alpha ) = \sum_{i = 1}^r {{\alpha _i}{A_i}}\label{eq:structure1}
\end{align}

In what follows, we introduce how to incorporate such structural and geometric properties into the stability conditions with additional slack variables, which eventually lead to less conservative conditions.
In particular,~\eqref{eq:structure1} holds if and only if
\begin{align*}
A(\alpha )x = \sum\limits_{i = 1}^r {{\alpha _i}{A_i}x}  = \left[ {\begin{array}{*{20}{c}}
{{A_1}}&{{A_2}}& \cdots &{{A_r}}
\end{array}} \right](\alpha  \otimes x),\quad
\end{align*}
for all $x \in {\mathbb R}^n \backslash \{ 0 \}$. The above equation can be rewritten as
\begin{align*}
{0_n} = \underbrace {\left[ {\begin{array}{*{20}{c}}
{A(\alpha )}&{ - \left[ {\begin{array}{*{20}{c}}
{{A_1}}&{{A_2}}& \cdots &{{A_r}}
\end{array}} \right]}& 0_{n \times n^2}
\end{array}} \right]}_{: = {\Gamma}(\alpha ) \in {\mathbb R}^{n \times (n + rn + {n^2})}}\underbrace {\left[ {\begin{array}{*{20}{c}}
x\\
{\alpha  \otimes x}\\
{x \otimes x}
\end{array}} \right]}_{ = :\xi(x) = :\xi }
\end{align*}
for all $\xi \in {\mathbb R}^{n+rn+n^2} \backslash \{ 0 \}$.
Therefore, for any slack variable matrix $N(\alpha)\in {\mathbb R}^{(n + rn + {n^2}) \times n}$, we have
\begin{align}
{\xi(x) ^T}[N(\alpha ){\Gamma}(\alpha ) + {\Gamma}{(\alpha )^T}N{(\alpha )^T}]\xi(x)  = 0,\quad \forall \alpha \in \Lambda_r, \beta \in \Lambda_p\label{eq:5}
\end{align}

Therefore, combining~\eqref{eq:5} with the Lyapunov inequality~\eqref{eq:2}, one can derive the following condition.
\begin{proposition}\label{prop:2}
Suppose that the MFs are continuously differentiable.
Moreover, suppose that there exists a symmetric matric $P = P^T \in {\mathbb R}^{(n+rn+n^2)\times (n+rn+n^2)}$, and parameterized matrices $N(\alpha)\in {\mathbb R}^{(n + rn + {n^2}) \times n} $ such that the following PLMIs hold:
\begin{align}
&P \succ 0,\label{eq:7}\\
&P\Phi (\alpha ,\beta ) + \Phi {(\alpha ,\beta )^T}P + N(\alpha ){\Gamma}(\alpha ) + {\Gamma}{(\alpha )^T} N{(\alpha )^T} \prec 0\label{eq:8}
\end{align}
for all $(\alpha ,\beta ) \in {\Lambda _r} \times {\Lambda _p}$. Then, the fuzzy system~\eqref{eq:fuzzy-system} is locally asymptotically stable, and Lyapunov sublevelset ${L_V}(c): = \{ x \in {\mathbb R}^n:V(x) \le c\}$ with some $c>0$ such that ${L_V}(c) \subseteq  {\cal X}$ is a DA.
\end{proposition}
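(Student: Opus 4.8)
The plan is to reuse the argument of~\cref{prop:1} almost verbatim, the only new ingredient being the slack term $N(\alpha){\Gamma}(\alpha) + {\Gamma}(\alpha)^T N(\alpha)^T$ appearing in~\eqref{eq:8}. First I would dispose of positivity: since $\xi(x)$ carries $x$ as its leading block, $\xi(x) \neq 0$ whenever $x \neq 0$, so $P \succ 0$ in~\eqref{eq:7} immediately gives $V(x) = \xi(x)^T P \xi(x) > 0$ for all $x \in {\cal X}\backslash\{0\}$, exactly as before.

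For the derivative condition, I would exploit that $\alpha(x) \in \Lambda_r$ and $\beta(x) \in \Lambda_p$ for every $x \in {\cal X}$, so the PLMI~\eqref{eq:8} may be evaluated at the particular parameters $\alpha = \alpha(x)$, $\beta = \beta(x)$, at which $\Phi(\alpha(x),\beta(x)) = \Phi(x)$. This yields the state-dependent inequality
\begin{align*}
P\Phi(x) + \Phi(x)^T P + N(\alpha(x)){\Gamma}(\alpha(x)) + {\Gamma}(\alpha(x))^T N(\alpha(x))^T \prec 0, \quad \forall x \in {\cal X},
\end{align*}
and pre- and post-multiplying by $\xi(x)^T$ and $\xi(x)$ produces, for all $x \in {\cal X}\backslash\{0\}$,
\begin{align*}
\xi(x)^T[P\Phi(x) + \Phi(x)^T P]\xi(x) + \xi(x)^T[N(\alpha(x)){\Gamma}(\alpha(x)) + {\Gamma}(\alpha(x))^T N(\alpha(x))^T]\xi(x) < 0.
\end{align*}

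The crux --- and the reason the slack variable adds freedom without corrupting the Lyapunov decrease --- is that the second bracketed term is identically zero along the genuine trajectory. This is the annihilation identity~\eqref{eq:5}: by construction ${\Gamma}(\alpha(x))\xi(x) = A(\alpha(x))x - \sum_{i=1}^r \alpha_i(x) A_i x = 0_n$, so the slack contribution vanishes and we are left with $\xi(x)^T[P\Phi(x) + \Phi(x)^T P]\xi(x) = \dot V(x) < 0$ on ${\cal X}\backslash\{0\}$. Together with $V(x) > 0$, the Lyapunov method~\cite{khalil2002nonlinear} certifies local asymptotic stability of the augmented system~\eqref{eq:augmented-system}, and~\cref{stability-identity} transfers this conclusion to the fuzzy system~\eqref{eq:fuzzy-system}. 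The domain-of-attraction statement is then identical to that in~\cref{prop:1}: any sublevel set $L_V(c) \subseteq {\cal X}$ is positively invariant and lies inside the region where $\dot V < 0$, hence is a valid DA.

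I do not anticipate a genuine obstacle, since the added term is engineered to annihilate on the constraint manifold $\{\xi(x) : x \in {\cal X}\}$; the result is a strictly weaker (more relaxed) matrix inequality than~\eqref{eq:2} that nonetheless implies the same trajectory-wise decrease. The single point demanding care is consistency of the evaluation: one must substitute the \emph{same} $\alpha = \alpha(x)$ into both ${\Gamma}(\alpha)$ and the $\alpha(x)\otimes x$ block of $\xi(x)$, for it is precisely this matching that makes ${\Gamma}(\alpha(x))\xi(x) = 0$; evaluating at mismatched arguments would not annihilate the slack term.
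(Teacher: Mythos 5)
Your proposal is correct and follows essentially the same route as the paper, which simply states that the proof mirrors that of~\cref{prop:1} with the additional use of the annihilation identity~\eqref{eq:5}; your explicit verification that $\Gamma(\alpha(x))\xi(x)=0$ kills the slack term along the trajectory is exactly the intended argument, spelled out in more detail than the paper provides.
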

\begin{proof}
The proof is similar to that of~\cref{prop:1} except for the use of~\eqref{eq:5}. Therefore, it is omitted for brevity.
\end{proof}

The stability conditions proposed in this paper thus far are time-derivative-free conditions. The main advantage of these conditions is their simplicity: they eliminate the need to establish bounds on the time-derivatives of MFs, which are hyperparameters requiring adjustment by the analyst.

\section{Comparative analysis}
All numerical experiments in the sequel were treated with the help
of MATLAB 2008a running on a Windows 7 PC with Intel Core i7-3770
3.4 GHz CPU, 24 GB RAM. The LMI problems were solved with SeDuMi
\cite{Strum1999} and Yalmip \cite{Lofberg2004}.

Let us consider the system in~\eqref{ex:1} in~\cref{ex:1} and~\cref{ex:2} again. In this example, we will compare the following methods: 1) the QLF-based approach in~\cref{lemma:quadratic}, 2) the FLF-based approach in~\cref{lemma:Tanaka2003} and~\cref{lemma:Mozelli2009}, the proposed approach~\cref{prop:2}, and Theorem~3 in~\cite{rhee2006new} (line integral approach).

To deal with the PLMIs with single fuzzy summations, the relaxation in~\cref{thm:relax0} is applied.
Moreover, for the PLMIs in~\cref{prop:2} with double fuzzy summations, the relaxation in~\cref{thm:relax1} is applied.
The single fuzzy summation in $\beta$ also arises, and the related PLMI is relaxed with~\cref{thm:relax0}.
For~\cref{prop:2}, we consider the parameterized matrix with single fuzzy summations $N(\alpha ) = \sum_{i = 1}^r {{\alpha _i}{N_{i}}}$.

\begin{table}[h!]
\caption{$\lambda^*$ for different approaches.}
\begin{center}
\begin{tabular}{c c}
\hline

\centering  Methods & $\lambda^*$\\

\hline

\centering \cref{lemma:quadratic} & $3.8269$\\
\centering \cref{prop:2} & $5.4749$\\
\centering \cref{lemma:Tanaka2003} with $\phi_i  = 2,i\in {\cal I}_r$ & $0.0061$\\
\centering \cref{lemma:Tanaka2003} with $\phi_i  = 1,i\in {\cal I}_r$ & $0.0061$\\
\centering \cref{lemma:Tanaka2003} with $\phi_i  = 0.5,i\in {\cal I}_r$ & $0.0061$\\
\centering \cref{lemma:Tanaka2003} with $\phi_i  = 0.1,i\in {\cal I}_r$ & $41.8152$\\
\centering \cref{lemma:Mozelli2009} with $\phi_i  = 2,i\in {\cal I}_r$ & $3.8269$\\
\centering \cref{lemma:Mozelli2009} with $\phi_i  = 1,i\in {\cal I}_r$ & $6.7810$\\
\centering \cref{lemma:Mozelli2009} with $\phi_i  = 0.5,i\in {\cal I}_r$ & $12.9333$\\
\centering \cref{lemma:Mozelli2009} with $\phi_i  = 0.1,i\in {\cal I}_r$ & $53.0457$\\
\centering Theorem~3 in~\cite{rhee2006new} & $7.7454$\\
\hline
\end{tabular}
\label{table:1}
\end{center}
\end{table}

To compare the conservatism, the maximum $\lambda >0$, denoted by $\lambda^*$, such that each method is feasible is calculated using the line search method. In particular, we first set the initial range of $\lambda \in [L,U]$ and check the stability condition for $\lambda = \frac{a+b}{2}$. If feasible, then $L$ is replaced with $\lambda$. Otherwise, $U$ is replaced with $\lambda$. Then, the previous feasibility test is repeated while $U-L > \varepsilon$, where $\varepsilon >0$ is a sufficiently small number that serves as a tolerable error margin. The conservatism of each method can be evaluated by comparing $\lambda^*$, where a larger $\lambda^*$ implies that the corresponding method is less conservative.
Overall comparative results are summarized in~\cref{table:1}.
\begin{remark}
If an LMI solver that is based on interior point methods is used, as, for instance, the LMI control toolbox~\cite{gahinet1996lmi}, the complexity
of the LMI optimization problem can be estimated as being proportional to $N_d^3 N_l$ , where $N_d$ is the total number of scalar decision variables, and $N_l$ is the total row size of the LMIs. \cref{table:2} lists the complexities associated with several approaches given in this paper. Moreover, \cref{fig:3} depicts the estimated computational costs, $\ln (N_d^2{N_l})$, of several approaches for different cases $n=2,r=3$, $n=3,r=3$, and $n=4,r=3$ in log scale. As can be seen from~\cref{fig:3}, the computational cost of~\cref{prop:2} is higher than the other approaches.

\end{remark}

\begin{table}[h!]
\caption{Computational complexities for different approaches.}
\begin{center}
\begin{tabular}{c c c}
\hline

\centering  Methods & $N_d$ & $N_l$\\

\hline

\centering \cref{lemma:quadratic} & $\frac{n(n+1)}{2}$ & $n+nr$\\
\centering \cref{lemma:Tanaka2003} & $\frac{n(n+1)r}{2}$ & $nr + \frac{nr(r+1)}{2}$ \\
\centering \cref{lemma:Mozelli2009} & $\frac{(r+1)n(n+1)}{2}$ & $2nr + \frac{nr(r+1)}{2}$ \\
\centering Theorem~3 in~\cite{rhee2006new} & $n^2-n+nr+\frac{n(n+1)}{2}$ & $2nr + \frac{nr(r-1)}{2}$\\
\centering \cref{prop:2} & $\frac{{(n + rn + {n^2})(n + 3rn + {n^2} + 1)}}{2}$ & $\frac{{({r^3} + {r^2} + 2)(n + rn + {n^2})}}{2}$ \\

\hline
\end{tabular}
\label{table:2}
\end{center}
\end{table}
\begin{figure}[h!]
\centering
\includegraphics[width=13cm]{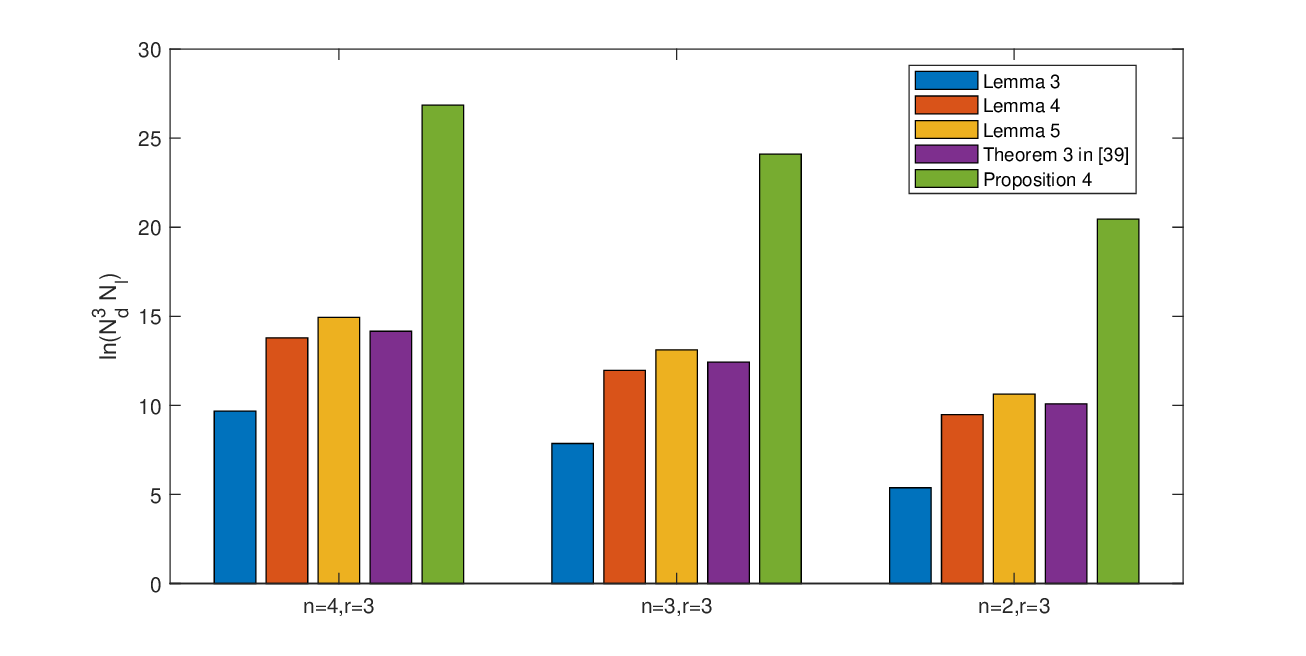}
\caption{Estimated computational cost $\ln (N_d^2{N_l})$ of several approaches for different cases $n=2,r=3$, $n=3,r=3$, and $n=4,r=3$}\label{fig:3}
\end{figure}

Some observations and conclusions from the results are summarized below.
\begin{enumerate}
\item For~\cref{lemma:quadratic}, we have $\lambda^* = 3.8269$, while with~\cref{prop:2}, we have $\lambda^* = 5.4749$.
This result reveals that~\cref{prop:2} is less conservative than the quadratic stability condition in~\cref{lemma:quadratic}.

\item It can be observed that the FLF approaches in~\cref{lemma:Tanaka2003} and~\cref{lemma:Mozelli2009} have the potential to yield less conservative results under the assumption that the time-derivatives of the MFs are bounded. Additionally,~\cref{lemma:Tanaka2003} and~\cref{lemma:Mozelli2009} tend to provide less conservative outcomes when the bounds $\phi_i,i\in {\cal I}_r$ on the time-derivatives of MFs are small. However, this does not imply that~\cref{lemma:Tanaka2003} and~\cref{lemma:Mozelli2009} are inherently less conservative than~\cref{prop:2}. This is because the DA estimates obtained through these methods are restricted to a subset that adheres to the bounds on the time-derivatives of MFs.

\item Theorem~3 in~\cite{rhee2006new} is less conservative than~\cref{prop:2}. However, a notable advantage of~\cref{prop:2} over the line integral approach in~\cite{rhee2006new} is that~\cref{prop:2} is based on a Lyapunov function that does not incorporate an integral. Consequently, this generally can potentially simplify the analysis of local stability.
\end{enumerate}

In summary, it can be concluded that the proposed~\cref{prop:2} is less conservative than the quadratic approach outlined in~\cref{lemma:quadratic}, yet it is more conservative compared to the line integral approach described in~\cite{rhee2006new}. However, the proposed method offers certain advantages over the line integral approach, primarily because it does not involve integration, thus simplifying the analysis of local stability. When comparing to the FLF approaches in~\cref{lemma:Tanaka2003} and~\cref{lemma:Mozelli2009}, a direct comparison of conservativeness is not feasible. Nonetheless, the proposed condition in~\cref{prop:2} offers more benefits as it eliminates the need to adjust the bounds on the time-derivatives of MFs, making it more user-friendly.

\section*{Conclusion}
In this paper, we have developed a new LMI condition for the asymptotic stability of T-S fuzzy systems. A key advantage of this new condition is its independence from the bounds on the time-derivatives of the MFs. This is achieved by introducing a novel FLF that incorporates an augmented state vector, which encompasses the MFs, thus allowing the dynamics of these MFs to be integrated into the proposed condition. The inclusion of additional information about the MFs serves to reduce the conservativeness of the suggested stability condition and also allows the condition to be derivative-free. The outlined approach suggests future research directions including: 1) stability conditions based on the sum-of-square techniques; 2) stabilization conditions using similar augmented state vector and Lyapunov function; 3) relaxations based on the convex sum relaxation techniques or multidimensional convex summation techniques; 4) estimation of DAs. In particular, the augmented state vector in this paper can act as new variables, and they can be naturally integrated into the sum-of-square framework to further reduce the conservatism. The main limitation of the proposed approach in terms of theoretical aspects is that the condition can only address stability problems. In the future, it can be also applied to stabilization problems, which require additional efforts for convexification due to inherent nonconvexity. The limitation of the approach in terms of applications is that it needs for the additional fuzzy modeling of the Jacobian, which requires additional steps and may not be available when the membership functions are complex or non-differentiable. Therefore, future research efforts can be made to overcome these drawbacks.
\section*{Acknowledgement}
The authors would like to thank the Associate Editor and the anonymous Reviewers for their careful reading and
constructive suggestions. This material was supported by the Institute of Information communications Technology Planning Evaluation (IITP) grant funded by the Korea government (MSIT)(No. 2022-0-00469).

\bibliographystyle{IEEEtran}
\bibliography{reference}

\begin{thebibliography}{10}
\providecommand{\url}[1]{#1}
\csname url@samestyle\endcsname
\providecommand{\newblock}{\relax}
\providecommand{\bibinfo}[2]{#2}
\providecommand{\BIBentrySTDinterwordspacing}{\spaceskip=0pt\relax}
\providecommand{\BIBentryALTinterwordstretchfactor}{4}
\providecommand{\BIBentryALTinterwordspacing}{\spaceskip=\fontdimen2\font plus
\BIBentryALTinterwordstretchfactor\fontdimen3\font minus
  \fontdimen4\font\relax}
\providecommand{\BIBforeignlanguage}[2]{{%
\expandafter\ifx\csname l@#1\endcsname\relax
\typeout{** WARNING: IEEEtran.bst: No hyphenation pattern has been}%
\typeout{** loaded for the language `#1'. Using the pattern for}%
\typeout{** the default language instead.}%
\else
\language=\csname l@#1\endcsname
\fi
#2}}
\providecommand{\BIBdecl}{\relax}
\BIBdecl

\bibitem{song2023improved}
X.~Song, Y.~Song, V.~Stojanovic, and S.~Song, ``Improved dynamic
  event-triggered security control for {T}--{S} fuzzy {LPV}-{PDE} systems via
  pointwise measurements and point control,'' \emph{International Journal of
  Fuzzy Systems}, vol.~25, no.~8, pp. 3177--3192, 2023.

\bibitem{sun2023composite}
P.~Sun, X.~Song, S.~Song, and V.~Stojanovic, ``Composite adaptive finite-time
  fuzzy control for switched nonlinear systems with preassigned performance,''
  \emph{International Journal of Adaptive Control and Signal Processing},
  vol.~37, no.~3, pp. 771--789, 2023.

\bibitem{du2024dynamic}
Z.~Du, X.~Xie, Z.~Qu, Y.~Hu, and V.~Stojanovic, ``Dynamic event-triggered
  consensus control for interval type-2 fuzzy multi-agent systems,'' \emph{IEEE
  Transactions on Circuits and Systems I: Regular Papers}, p. DOI:
  0.1109/TCSI.2024.3371492, 2024.

\bibitem{nguyen2019fuzzy}
A.-T. Nguyen, T.~Taniguchi, L.~Eciolaza, V.~Campos, R.~Palhares, and M.~Sugeno,
  ``Fuzzy control systems: {P}ast, present and future,'' \emph{IEEE
  Computational Intelligence Magazine}, vol.~14, no.~1, pp. 56--68, 2019.

\bibitem{xie2022relaxed}
X.~Xie, C.~Wei, Z.~Gu, and K.~Shi, ``Relaxed resilient fuzzy stabilization of
  discrete-time {T}akagi--{S}ugeno systems via a higher order time-variant
  balanced matrix method,'' \emph{IEEE Transactions on Fuzzy Systems}, vol.~30,
  no.~11, pp. 5044--5050, 2022.

\bibitem{xie2022enhanced}
X.~Xie, F.~Yang, L.~Wan, J.~Xia, and K.~Shi, ``Enhanced local stabilization of
  constrained {N}-{TS} fuzzy systems with lighter computational burden,''
  \emph{IEEE Transactions on Fuzzy Systems}, vol.~31, no.~3, pp. 1064--1070,
  2022.

\bibitem{tian2021switched}
Y.~Tian and Z.~Wang, ``A switched fuzzy filter approach to $h_\infty$ filtering
  for {T}akagi-{S}ugeno fuzzy {M}arkov jump systems with time delay: {T}he
  continuous-time case,'' \emph{Information Sciences}, vol. 557, pp. 236--249,
  2021.

\bibitem{zheng2021membership}
H.~Zheng, W.-B. Xie, H.-K. Lam, and L.~Wang, ``Membership-function-dependent
  stability analysis and local controller design for {T}--{S} fuzzy systems:
  {A} space-enveloping approach,'' \emph{Information Sciences}, vol. 548, pp.
  233--253, 2021.

\bibitem{pan2021improved}
X.-J. Pan, B.~Yang, J.-J. Cao, and X.-D. Zhao, ``Improved stability analysis of
  {T}akagi-{S}ugeno fuzzy systems with time-varying delays via an extended
  delay-dependent reciprocally convex inequality,'' \emph{Information
  Sciences}, vol. 571, pp. 24--37, 2021.

\bibitem{ku2021observer}
C.-C. Ku, W.-J. Chang, M.-H. Tsai, and Y.-C. Lee, ``Observer-based proportional
  derivative fuzzy control for singular {T}akagi-{S}ugeno fuzzy systems,''
  \emph{Information Sciences}, vol. 570, pp. 815--830, 2021.

\bibitem{zhu2020adaptive}
Z.-Y. Zhu, Z.-S. Zhao, J.~Zhang, R.-K. Wang, and Z.~Li, ``Adaptive fuzzy
  control design for synchronization of chaotic time-delay system,''
  \emph{Information Sciences}, vol. 535, pp. 225--241, 2020.

\bibitem{khalil2002nonlinear}
H.~K. Khalil, \emph{Nonlinear systems}, 2002.

\bibitem{tanaka2004fuzzy}
K.~Tanaka and H.~O. Wang, \emph{Fuzzy control systems design and analysis: {A}
  linear matrix inequality approach}.\hskip 1em plus 0.5em minus 0.4em\relax
  John Wiley \& Sons, 2004.

\bibitem{boyd1994linear}
S.~Boyd, L.~El~Ghaoui, E.~Feron, and V.~Balakrishnan, \emph{Linear matrix
  inequalities in system and control theory}.\hskip 1em plus 0.5em minus
  0.4em\relax SIAM, 1994.

\bibitem{fang2006new}
C.-H. Fang, Y.-S. Liu, S.-W. Kau, L.~Hong, and C.-H. Lee, ``A new {LMI}-based
  approach to relaxed quadratic stabilization of {TS} fuzzy control systems,''
  \emph{IEEE Transactions on fuzzy systems}, vol.~14, no.~3, pp. 386--397,
  2006.

\bibitem{kim2000new}
E.~Kim and H.~Lee, ``New approaches to relaxed quadratic stability condition of
  fuzzy control systems,'' \emph{IEEE Transactions on Fuzzy systems}, vol.~8,
  no.~5, pp. 523--534, 2000.

\bibitem{kim2022improvement}
S.~H. Kim, ``Improvement of slack variable-free relaxation method and its
  application to control design of {TS} fuzzy systems,'' \emph{IEEE Access},
  vol.~10, pp. 71\,830--71\,836, 2022.

\bibitem{tuan2001parameterized}
H.~D. Tuan, P.~Apkarian, T.~Narikiyo, and Y.~Yamamoto, ``Parameterized linear
  matrix inequality techniques in fuzzy control system design,'' \emph{IEEE
  Transactions on fuzzy systems}, vol.~9, no.~2, pp. 324--332, 2001.

\bibitem{kim2023relaxed}
D.~W. Kim and D.~Lee, ``Relaxed conditions for parameterized linear matrix
  inequality in the form of double fuzzy summations,'' \emph{IEEE Transactions
  on Fuzzy Systems}, vol.~32, no.~3, pp. 1608--1612, 2024.

\bibitem{sala2007asymptotically}
A.~Sala and C.~Ari{\~n}o, ``Asymptotically necessary and sufficient conditions
  for stability and performance in fuzzy control: {A}pplications of {P}olya's
  theorem,'' \emph{Fuzzy sets and systems}, vol. 158, no.~24, pp. 2671--2686,
  2007.

\bibitem{oliveira2007parameter}
R.~C. Oliveira and P.~L. Peres, ``Parameter-dependent {LMI}s in robust
  analysis: {C}haracterization of homogeneous polynomially parameter-dependent
  solutions via {LMI} relaxations,'' \emph{IEEE Transactions on Automatic
  Control}, vol.~52, no.~7, pp. 1334--1340, 2007.

\bibitem{sala2007relaxed}
A.~Sala and C.~Arino, ``Relaxed stability and performance conditions for
  {T}akagi--{S}ugeno fuzzy systems with knowledge on membership function
  overlap,'' \emph{IEEE Transactions on Systems, Man, and Cybernetics, Part B
  (Cybernetics)}, vol.~37, no.~3, pp. 727--732, 2007.

\bibitem{narimani2009relaxed}
M.~Narimani and H.-K. Lam, ``Relaxed {LMI}-based stability conditions for
  {T}akagi--{S}ugeno fuzzy control systems using
  regional-membership-function-shape-dependent analysis approach,'' \emph{IEEE
  Transactions on Fuzzy Systems}, vol.~17, no.~5, pp. 1221--1228, 2009.

\bibitem{kruszewski2009triangulation}
A.~Kruszewski, A.~Sala, T.~M. Guerra, and C.~Arino, ``A triangulation approach
  to asymptotically exact conditions for fuzzy summations,'' \emph{IEEE
  Transactions on Fuzzy Systems}, vol.~17, no.~5, pp. 985--994, 2009.

\bibitem{lam2009quadratic}
H.-K. Lam and M.~Narimani, ``Quadratic-stability analysis of fuzzy-model-based
  control systems using staircase membership functions,'' \emph{IEEE
  Transactions on Fuzzy Systems}, vol.~18, no.~1, pp. 125--137, 2009.

\bibitem{johansson1999piecewise}
M.~Johansson, A.~Rantzer, and K.-E. Arzen, ``Piecewise quadratic stability of
  fuzzy systems,'' \emph{IEEE Transactions on Fuzzy Systems}, vol.~7, no.~6,
  pp. 713--722, 1999.

\bibitem{feng2003controller}
G.~Feng, ``Controller synthesis of fuzzy dynamic systems based on piecewise
  {L}yapunov functions,'' \emph{IEEE Transactions on Fuzzy Systems}, vol.~11,
  no.~5, pp. 605--612, 2003.

\bibitem{campos2012new}
V.~C. Campos, F.~O. Souza, L.~A. T{\^o}rres, and R.~M. Palhares, ``New
  stability conditions based on piecewise fuzzy {L}yapunov functions and tensor
  product transformations,'' \emph{IEEE Transactions on Fuzzy Systems},
  vol.~21, no.~4, pp. 748--760, 2012.

\bibitem{tanaka2003multiple}
K.~Tanaka, T.~Hori, and H.~O. Wang, ``A multiple {L}yapunov function approach
  to stabilization of fuzzy control systems,'' \emph{IEEE Transactions on fuzzy
  systems}, vol.~11, no.~4, pp. 582--589, 2003.

\bibitem{mozelli2009reducing}
L.~A. Mozelli, R.~M. Palhares, F.~Souza, and E.~M. Mendes, ``Reducing
  conservativeness in recent stability conditions of {TS} fuzzy systems,''
  \emph{Automatica}, vol.~45, no.~6, pp. 1580--1583, 2009.

\bibitem{guerra2004lmi}
T.~M. Guerra and L.~Vermeiren, ``Lmi-based relaxed nonquadratic stabilization
  conditions for nonlinear systems in the {T}akagi--{S}ugeno's form,''
  \emph{Automatica}, vol.~40, no.~5, pp. 823--829, 2004.

\bibitem{ding2006further}
B.~Ding, H.~Sun, and P.~Yang, ``Further studies on {LMI}-based relaxed
  stabilization conditions for nonlinear systems in {T}akagi--{S}ugeno's
  form,'' \emph{Automatica}, vol.~42, no.~3, pp. 503--508, 2006.

\bibitem{lee2010improvement}
D.~H. Lee, J.~B. Park, and Y.~H. Joo, ``Improvement on nonquadratic
  stabilization of discrete-time {T}akagi--{S}ugeno fuzzy systems:
  {M}ultiple-parameterization approach,'' \emph{IEEE Transactions on Fuzzy
  Systems}, vol.~18, no.~2, pp. 425--429, 2010.

\bibitem{lee2011new}
------, ``A new fuzzy {L}yapunov function for relaxed stability condition of
  continuous-time {T}akagi--{S}ugeno fuzzy systems,'' \emph{IEEE Transactions
  on Fuzzy Systems}, vol.~19, no.~4, pp. 785--791, 2011.

\bibitem{lee2014generalized}
D.~H. Lee and Y.~H. Joo, ``On the generalized local stability and local
  stabilization conditions for discrete-time {T}akagi--{S}ugeno fuzzy
  systems,'' \emph{IEEE Transactions on Fuzzy Systems}, vol.~22, no.~6, pp.
  1654--1668, 2014.

\bibitem{lee2013relaxed}
D.~H. Lee \emph{et~al.}, ``Relaxed {LMI} conditions for local stability and
  local stabilization of continuous-time {T}akagi--{S}ugeno fuzzy systems,''
  \emph{IEEE Transactions on Cybernetics}, vol.~44, no.~3, pp. 394--405, 2013.

\bibitem{ding2010homogeneous}
B.~Ding, ``Homogeneous polynomially nonquadratic stabilization of discrete-time
  {T}akagi--{S}ugeno systems via nonparallel distributed compensation law,''
  \emph{IEEE Transactions on Fuzzy Systems}, vol.~18, no.~5, pp. 994--1000,
  2010.

\bibitem{lazarini2021relaxed}
A.~Z. Lazarini, M.~C. Teixeira, M.~D.~S. Jean, E.~Assun{\c{c}}{\~a}o,
  R.~Cardim, and A.~S. Buzetti, ``Relaxed stabilization conditions for {TS}
  fuzzy systems with optimal upper bounds for the time derivative of fuzzy
  {L}yapunov functions,'' \emph{IEEE Access}, vol.~9, pp. 64\,945--64\,957,
  2021.

\bibitem{rhee2006new}
B.-J. Rhee and S.~Won, ``A new fuzzy {L}yapunov function approach for a
  {T}akagi--{S}ugeno fuzzy control system design,'' \emph{Fuzzy sets and
  systems}, vol. 157, no.~9, pp. 1211--1228, 2006.

\bibitem{gonzalez2019generalised}
T.~Gonz{\'a}lez, A.~Sala, and M.~Bernal, ``A generalised integral polynomial
  {L}yapunov function for nonlinear systems,'' \emph{Fuzzy Sets and Systems},
  vol. 356, pp. 77--91, 2019.

\bibitem{Tanaka2009}
K.~Tanaka, H.~Yoshida, H.~Ohtake, and H.~O. Wang, ``A sum-of-squares approach
  to modeling and control of nonlinear dynamical systems with polynomial fuzzy
  systems,'' \emph{IEEE Transactions on Fuzzy systems}, vol.~17, no.~4, pp.
  911--922, 2008.

\bibitem{sala2009polynomial}
A.~Sala and C.~Arino, ``Polynomial fuzzy models for nonlinear control: {A}
  {T}aylor series approach,'' \emph{IEEE Transactions on Fuzzy Systems},
  vol.~17, no.~6, pp. 1284--1295, 2009.

\bibitem{lam2011polynomial}
H.-K. Lam, ``Polynomial fuzzy-model-based control systems: {S}tability analysis
  via piecewise-linear membership functions,'' \emph{IEEE Transactions on Fuzzy
  Systems}, vol.~19, no.~3, pp. 588--593, 2011.

\bibitem{bernal2011stability}
M.~Bernal, A.~Sala, A.~Jaadari, and T.-M. Guerra, ``Stability analysis of
  polynomial fuzzy models via polynomial fuzzy {L}yapunov functions,''
  \emph{Fuzzy Sets and Systems}, vol. 185, no.~1, pp. 5--14, 2011.

\bibitem{kruszewski2008nonquadratic}
A.~Kruszewski, R.~Wang, and T.-M. Guerra, ``Nonquadratic stabilization
  conditions for a class of uncertain nonlinear discrete time {TS} fuzzy
  models: {A} new approach,'' \emph{IEEE Transactions on Automatic Control},
  vol.~53, no.~2, pp. 606--611, 2008.

\bibitem{lee2011approaches}
D.~H. Lee, J.~B. Park, and Y.~H. Joo, ``Approaches to extended non-quadratic
  stability and stabilization conditions for discrete-time {T}akagi--{S}ugeno
  fuzzy systems,'' \emph{Automatica}, vol.~47, no.~3, pp. 534--538, 2011.

\bibitem{lee2010further}
------, ``Further theoretical justification of the $k$-samples variation
  approach for discrete-time {T}akagi--{S}ugeno fuzzy systems,'' \emph{IEEE
  Transactions on Fuzzy Systems}, vol.~19, no.~3, pp. 594--597, 2010.

\bibitem{lee2011further}
------, ``Further improvement of periodic control approach for relaxed
  stabilization condition of discrete-time {T}akagi--{S}ugeno fuzzy systems,''
  \emph{Fuzzy Sets and Systems}, vol. 174, no.~1, pp. 50--65, 2011.

\bibitem{bernal2010generalized}
M.~Bernal and T.~M. Guerra, ``Generalized nonquadratic stability of
  continuous-time {T}akagi--{S}ugeno models,'' \emph{IEEE Transactions on Fuzzy
  Systems}, vol.~18, no.~4, pp. 815--822, 2010.

\bibitem{pan2011nonquadratic}
J.-T. Pan, T.~M. Guerra, S.-M. Fei, and A.~Jaadari, ``Nonquadratic
  stabilization of continuous {T}--{S} fuzzy models: {LMI} solution for a local
  approach,'' \emph{IEEE Transactions on Fuzzy Systems}, vol.~20, no.~3, pp.
  594--602, 2011.

\bibitem{lee2012fuzzy}
D.~H. Lee, J.~B. Park, and Y.~H. Joo, ``A fuzzy {L}yapunov function approach to
  estimating the domain of attraction for continuous-time {T}akagi--{S}ugeno
  fuzzy systems,'' \emph{Information Sciences}, vol. 185, no.~1, pp. 230--248,
  2012.

\bibitem{lee2013linear}
D.~H. Lee, Y.~H. Joo, and M.~H. Tak, ``Linear matrix inequality approach to
  local stability analysis of discrete-time {T}akagi--{S}ugeno fuzzy systems,''
  \emph{IET Control Theory \& Applications}, vol.~7, no.~9, pp. 1309--1318,
  2013.

\bibitem{lee2016local}
D.~Lee and J.~Hu, ``Local model predictive control for {T}--{S} fuzzy
  systems,'' \emph{IEEE Transactions on Cybernetics}, vol.~47, no.~9, pp.
  2556--2567, 2016.

\bibitem{lee2014local}
D.~H. Lee, Y.~H. Joo, and M.~H. Tak, ``Local stability analysis of
  continuous-time {T}akagi--{S}ugeno fuzzy systems: {A} fuzzy {L}yapunov
  function approach,'' \emph{Information Sciences}, vol. 257, pp. 163--175,
  2014.

\bibitem{Boyd2004}
S.~Boyd and L.~Vandenberghe, \emph{Convex Optimization}.\hskip 1em plus 0.5em
  minus 0.4em\relax Cambridge University Press, 2004.

\bibitem{wang1996approach}
H.~O. Wang, K.~Tanaka, and M.~F. Griffin, ``An approach to fuzzy control of
  nonlinear systems: {S}tability and design issues,'' \emph{IEEE Transactions
  on Fuzzy Systems}, vol.~4, no.~1, pp. 14--23, 1996.

\bibitem{tanaka1998fuzzy}
K.~Tanaka, T.~Ikeda, and H.~O. Wang, ``Fuzzy regulators and fuzzy observers:
  {R}elaxed stability conditions and {LMI}-based designs,'' \emph{IEEE
  Transactions on fuzzy systems}, vol.~6, no.~2, pp. 250--265, 1998.

\bibitem{Teixeira2003}
M.~C. Teixeira, E.~Assun{\c{c}}{\~a}o, and R.~G. Avellar, ``On relaxed
  {LMI}-based designs for fuzzy regulators and fuzzy observers,'' \emph{IEEE
  Transactions on Fuzzy Systems}, vol.~11, no.~5, pp. 613--623, 2003.

\bibitem{Strum1999}
J.~F. Sturm, ``Using {SeDuMi} 1.02, a {MATLAB} toolbox for optimization over
  symmetric cones,'' \emph{Optimization Methods and Software}, vol. 11-12, pp.
  625--653, 1999.

\bibitem{Lofberg2004}
J.~L\"{o}fberg, ``{YALMIP}: {A} toolbox for modeling and optimization in
  {MATLAB},'' in \emph{Proceedings of IEEE International Symposium on Computer
  Aided Control Systems Design, Taipei}, 2004, pp. 284--289, [Online].
  Available: http://control.ee.ethz.ch/~joloef/yalmip.php.

\bibitem{gahinet1996lmi}
P.~Gahinet, ``{LMI} control toolbox,'' \emph{The Math Works Inc.}, 1996.

\end{thebibliography}

\end{document}